\newif\ifarXiv
\newtheorem{theorem}{Theorem}
\newtheorem{lemma}{Lemma}
\newtheorem{definition}{Definition}
\newtheorem{corollary}{Corollary}
\newcommand{\CNOT}{\mathsf{CNOT}}
\newcommand{\QFT}{\mathsf{QFT}}
\newcommand{\MCX}{\mathsf{MCX}}
\newcommand{\Ig}{\mathsf{I}}
\newcommand{\Ug}{\mathsf{U}}
\newcommand{\Xg}{\mathsf{X}}
\newcommand{\fanout}{\textsc{Fan-out}}
\newcommand{\Fg}{\mathsf{F}}
\newcommand{\ladder}{\textsc{Ladder}}
\newcommand{\Lg}{\mathsf{L}}
\newcommand{\ie}{i.e.}
\newcommand{\etal}{et \; al.}
\newcommand*{\eqdef}{\stackrel{\text{def}}{=}}
\newcommand{\floor}[1]{\left\lfloor #1 \right\rfloor}
\newcommand{\ceil}[1]{\left\lceil #1 \right\rceil}
\newcommand{\Rbra}[1]{\left( #1 \right)}
\newcommand{\bigO}[1]{O\Rbra{#1}}
\newcommand{\bs}[1]{\boldsymbol #1}
\renewcommand{\mod}[1]{\ \mathrm{mod}\ #1}
\begin{document}

\title{Ancilla-free Quantum Adder with Sublinear Depth}
\author[1]{Maxime Remaud}
\orcid{0009-0008-1597-3661}
\author[1,2]{Vivien Vandaele}
\affil[1]{Eviden Quantum Lab, Les Clayes-sous-Bois, France}
\affil[2]{Universit\'e de Lorraine, CNRS, Inria, LORIA, F-54000 Nancy, France}
\orcid{0000-0002-9330-4999}
\maketitle

\begin{abstract} 
    We present the first quantum adder with sublinear depth and no ancilla qubits. Our construction is based on classical reversible logic only and employs low-depth implementations for the $\CNOT$ ladder operator and the Toffoli ladder operator, two key components to perform ripple-carry addition.
    
    Namely, we demonstrate that any ladder of $n$ $\CNOT$ gates can be replaced by a $\CNOT$-circuit with $\bigO{\log n}$ depth, while maintaining a linear number of gates. We then generalize this construction to Toffoli gates and demonstrate that any ladder of $n$ Toffoli gates can be substituted with a circuit with $\bigO{\log^2 n}$ depth while utilizing a linearithmic number of gates. This builds on the recent works of Nie $\etal$ \cite{NZS24} and Khattar and Gidney \cite{KG24} on the technique of \textit{conditionally clean ancillae}. By combining these two key elements, we present a novel approach to design quantum adders that can perform the addition of two $n$-bit numbers in depth $\bigO{\log^2 n}$ without the use of any ancilla and using classical reversible logic only (Toffoli, $\CNOT$ and $\Xg$ gates). We also present new constructions for incrementing and adding a constant to a quantum register.
\end{abstract}

\section{Introduction}

In the 1990s, the discovery by Shor of a polynomial-time quantum algorithm for factoring numbers \cite{Sho97} was a catalyst for a surge of interest in quantum computing and its potential to accelerate the resolution of various problems. At the core of this algorithm lies a reversible modular exponentiation operator, which, in turn, necessitates the use of reversible operators for the most fundamental arithmetic operations, such as addition and multiplication. Since then, numerous works have been proposed to improve the complexity of Shor's algorithm, focusing in particular on improving the complexity of these arithmetic subroutines \cite{VBE96,TK06,Gid18}. These arithmetic subroutines have also found applications in various quantum algorithms for solving other problems \cite{Reg04,CWM12,LZX20}. In this paper, our focus will be on what is arguably the most fundamental of arithmetic operations: addition.

There are two main methodologies for computing the sum of two $n$-bit numbers contained in quantum registers. The first is derived from classical reversible circuits and thus uses only classical gates (the $\Xg$, $\CNOT$, and Toffoli gates). The second uses the quantum Fourier transform ($\QFT$ for short), which allows to write the sum in the phases by means of rotations, instead of directly working on the quantum registers \cite{Dra02}. It has the significant advantages of requiring no ancilla qubit and running in logarithmic time in its approximate version when only the part performing the addition in the Fourier domain is used, at the cost of relying on inherently quantum gates (such as the Hadamard gate) and small-angle rotation gates. Fourier-based addition is therefore deficient in two major aspects \cite{HRS17}. Firstly, it cannot be efficiently simulated on classical computers, which can slow down implementation, testing and debugging. Secondly and more importantly, it incurs a significant rotation synthesis overhead when quantum error correction is being taken into consideration. Classical reversible arithmetic therefore appears to be more advantageous.

If we take a closer look at these classical reversible methods for addition, two stand out. The first is known as the carry-lookahead method \cite{DKR06} and consists in strongly parallelizing the calculation of the carries occurring in the process of addition, at the cost of using $\Theta(n/\log_2(n))$ ancilla qubits to store intermediate results \cite{TK08,TTK10}. While it achieves logarithmic time complexity with a linear number of gates, its space overhead makes it less practical in cases where the number of qubits at disposal is limited, such as for near-term quantum devices.

Finally, the second classical reversible method and probably the simplest and most intuitive method, known as the ripple-carry technique \cite{VBE96,CDKM04,TTK10}, essentially consists in recursive calculation of the successive carries. It is possible to implement it without any ancilla qubits using a linear number of gates, but with a linear depth, due to the recursive calculation performed involving ladders of $\CNOT$ and Toffoli gates \cite{TTK10}. We give in Figure~\ref{circ:TTK} an example of circuit for the addition of two $n$-bit numbers with $n=5$ derived from the technique proposed in \cite{TTK10}. The linear depth clearly comes from the ladders of $\CNOT$ gates (in brown) and Toffoli gates (in purple) in Slices 2, 3, 5 and 6 (slice numbers refer to the block preceding them). Substitution of these $\CNOT$ and Toffoli ladders with shallower circuits would directly imply an improvement over the linear depth of Takahashi $\etal$ ripple-carry addition circuit, as well as any other circuit based on the ripple-carry technique.

\paragraph{Our contributions.} We give in Section~\ref{sec:preli} some notation and preliminaries before going into the technical details in the following sections. Namely,

\begin{itemize}
    \item In Section~\ref{sec:ladder1}, we prove that one can replace any ladder of $n$ $\CNOT$ gates ($\ie$, a circuit with a $\CNOT$-depth of $n$) by an equivalent $\CNOT$ circuit with logarithmic depth and linear size. We provide the pseudocode for constructing such a circuit and prove its correctness.
    \item In Section~\ref{sec:ladder2}, we show that a ladder of Toffoli gates can be similarly replaced by a logarithmic-depth circuit composed of multi-controlled $\Xg$ gates. In combination with the recent work of Khattar and Gidney \cite{KG24} on the implementation of these multi-controlled $\Xg$ gates in logarithmic depth over the $\{\Xg,\text{Toffoli}\}$ gate set (thanks to the technique of \textit{conditionally clean ancillae}), we prove that any ladder of $n$ Toffoli gates ($\ie$, a circuit with a Toffoli-depth of $\bigO{n}$) can be replaced by a polylogarithmic-depth circuit (more precisely, with a Toffoli-depth of $\bigO{\log^2 n}$) comprising a linearithmic number of Toffoli gates, that does not necessitate any ancilla qubit. We provide the pseudocode for constructing such a circuit, and we prove its correctness.
    \item In Section~\ref{sec:adder}, by applying the results of the two previous sections to a slightly tweaked version of the quantum ripple-carry adder proposed by Takahashi $\etal$ \cite{TTK10}, we ultimately prove that the addition of two $n$-bit numbers on a quantum computer can be done with only Toffoli, $\CNOT$, and $\Xg$ gates, in $\bigO{\log^2 n}$ depth, with $\bigO{n \log n}$ gates and no ancilla. This is the first quantum adder with classical reversible logic only, no ancilla, and $o(n)$ depth (see Table~\ref{table:Adders}).
    \item In Section~\ref{sec:ctrl_adder}, we extend the result of the previous section and prove it holds when considering a controlled version of our quantum ripple-carry adder. This is the operator typically employed in routines such as the modular exponentiation one, which is utilized in Shor's algorithm.
    \item In Section~\ref{sec:QCarith}, we show that the results on the ladders also allow for depth optimization of incrementors and more generally of classical-quantum adders.
\end{itemize}

\begin{figure}[ht]
    \centering
    \includegraphics[width=.9\textwidth]{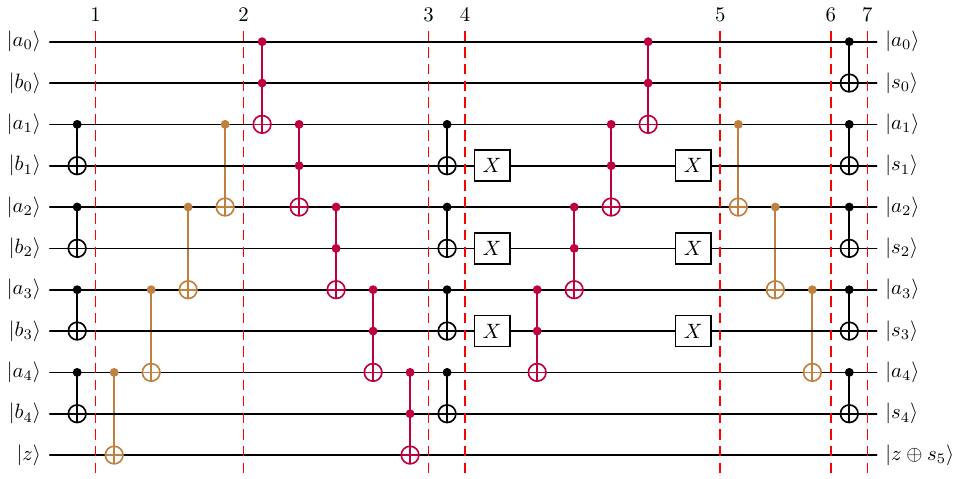}
    \caption{Ancilla-free adder represented as a circuit for $n=5$, derived from \cite{TTK10}. We define $s \eqdef a+b$.}\label{circ:TTK}
\end{figure}

\begin{table}[ht]
    \hspace{1cm}
    \centering
    \begin{NiceTabular}{c||c|c|c|c}[hvlines, first-col]
        & Addition Technique                                        & Size     & Depth 	& Ancilla   \\ \hline\hline
        \Block[c]{1-1}{Fourier-\\[-.3ex]based} 
        & $\QFT$ \cite{Dra02}        & $\bigO{n^2}$ & $\bigO{n}$  & 0       \\\hline\hline
        \Block[c]{3-1}{Classical\\[-.3ex]reversible} 
        & Carry-lookahead \cite{TK08}                   & $\bigO{n}$        & $\bigO{\log n}$   & $\Theta(n / \log n)$ \\
        & Ripple-carry \cite{TTK10}                     & $\bigO{n}$        & $\bigO{n}$        & 0 \\
        & Ripple-carry, Section~\ref{sec:adder}         & $\bigO{n \log n}$ & $\bigO{\log^2 n}$ & 0 \\ \hline
        \CodeAfter
        \SubMatrix\{{3-1}{5-1}.[left-xshift=2mm]
    \end{NiceTabular}
    \caption{Asymptotic complexity of quantum adders. }\label{table:Adders}
\end{table}
\section{Preliminaries} \label{sec:preli}

\subsection{Notation}

The controlled NOT gate, denoted by $\CNOT$, is a well-known quantum gate that operates on two qubits. One of the qubits serves as a control for the application of an $\Xg$ gate on the second qubit, which is referred to as the target. In a similar manner, the well-known Toffoli gate operates on three qubits and involves two control qubits. The application of the $\Xg$ gate to the target is conditional upon the state of both control qubits, which must be set to $\ket{1}$. This notion of controlled $\Xg$ gate can be generalized to an arbitrary number $n$ of control qubits, resulting in what is referred to as the $\MCX_n$ gate (an abbreviation for Multi-Controlled $\Xg$).

\begin{definition}
    Let $t \in \{0,1\}$ and $x_i \in \{0,1\} \; \forall i \in [\![0,n-1]\!]$. We define the multi-controlled $\Xg$ gate with $n$ controls as the operator $\MCX_n$ with the following action:
    \begin{equation}\label{eq:MCX}
        \MCX_n \ket{x_0, \ldots, x_{n-1}, t} = \lvert x_0, \ldots, x_{n-1}, t \oplus \prod_{i=0}^{n-1} x_i\rangle
    \end{equation}
\end{definition}

\noindent Note that we have $\CNOT = \MCX_1$ and $\text{Toffoli} = \MCX_2$.

\noindent Throughout this document, $\log (x)$ will denote the binary logarithm of $x$.

\subsection{Fan-Out operator}

We define the $\fanout_1$ operator as follows:

\begin{definition}
    Let $c \in \{0,1\}$ and $x_i \in \{0,1\} \; \forall i \in [\![0,n-1]\!]$. We define $\fanout_1$ on $n+1$ qubits as the operator $\Fg_1^{(n)}$ with the following action:
    \begin{equation}\label{eq:fanout}
        \Fg_1^{(n)} \Rbra{\ket{c} \otimes \bigotimes_{i=0}^{n-1} \ket{x_i}} \eqdef \ket{c} \otimes \Rbra{\bigotimes_{i=0}^{n-1} \ket{x_i \oplus c}}
    \end{equation}
\end{definition}

A naive implementation of this operator consists of successively applying $\CNOT$ gates using $\ket{c}$ as the control qubit and the $\ket{x_i}$ as successive targets. This implementation requires a total of $n$ $\CNOT$ gates and a depth of $n$. However, it is well known in the literature that this operator can be implemented in logarithmic depth while retaining a linear number of gates and without the use of any ancilla qubit thanks to a divide-and-conquer approach \cite{FFG06,BK09}. 

\begin{lemma}[Folklore]\label{th:fanout}
    The $\Fg_1^{(n)}$ operator can be implemented with only $\CNOT$ gates in depth $\bigO{\log n}$ and size $\bigO{n}$, without ancilla.
\end{lemma}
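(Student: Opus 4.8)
The plan is to prove Lemma~\ref{th:fanout} by a divide-and-conquer (doubling) construction. The key observation is that $\Fg_1^{(n)}$ can be built by first copying the control value $c$ into roughly half of the targets, and then recursing: once we have two copies of $c$ available, each can independently seed a $\fanout$ over half of the remaining qubits, and these two halves act on disjoint wires so they run in parallel. Concretely, I would think of it as a binary-tree broadcast. At the first time step, one $\CNOT$ from the control qubit to $x_{\lfloor n/2\rfloor}$ produces a second qubit holding $c$. At the second time step, two $\CNOT$s (from $c$ on qubit $-1$ and from $c$ on qubit $\lfloor n/2\rfloor$) hit two more targets, so that four qubits now hold $c$. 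Continuing, after $k$ parallel layers of $\CNOT$s, $2^k$ qubits hold the value $c$, so after $\lceil \log(n+1)\rceil$ layers all $n$ targets have been written.

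First I would set up the recursion precisely. Define a circuit $C_m$ acting on $m+1$ qubits (one control, $m$ targets) that implements $\Fg_1^{(m)}$; for $m=0$ it is the identity, for $m=1$ it is a single $\CNOT$. For $m\ge 2$, write $C_m$ as: one $\CNOT$ from the control onto target $x_{\lceil m/2\rceil -1}$ (or any fixed choice of a ``midpoint'' target), followed by the parallel composition of $C_{\lceil m/2\rceil -1}$ acting on $\{c\}\cup\{x_0,\dots,x_{\lceil m/2\rceil -2}\}$ and a relabeled copy of $C_{\lfloor m/2\rfloor}$ acting on $\{x_{\lceil m/2\rceil -1}\}\cup\{x_{\lceil m/2\rceil},\dots,x_{m-1}\}$, using the freshly written qubit $x_{\lceil m/2\rceil -1}$ as the ``control'' of the second block.

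Next I would verify correctness by induction on $m$: after the initial $\CNOT$, qubit $x_{\lceil m/2\rceil -1}$ holds $c$, and applying the inductive hypothesis to each of the two disjoint blocks writes $c$ (XORed into the original $x_i$, which were untouched so far) onto all remaining targets, while leaving the control unchanged. Hence $C_m$ realizes \eqref{eq:fanout}. For the resource bounds, let $D(m)$ and $S(m)$ denote the depth and size of $C_m$; the recursion gives $S(m) = S(\lceil m/2\rceil -1) + S(\lfloor m/2\rfloor) + 1$, which solves to $S(m) = m$ (one $\CNOT$ per target), and $D(m) = 1 + \max\{D(\lceil m/2\rceil -1), D(\lfloor m/2\rfloor)\} \le 1 + D(\lceil m/2\rceil)$, which gives $D(m) = \bigO{\log m}$. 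No ancilla is introduced at any stage since every ``helper'' qubit used as a control in a recursive call is one of the original target qubits, already loaded with the correct value $c$. Taking $m=n$ yields the claim.

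The only delicate point — and the one I would be most careful about — is the bookkeeping of which qubits are ``available'' as sources at each recursion level, and ensuring the two recursive blocks truly act on disjoint qubit sets so that they compose in parallel without any scheduling conflict; an off-by-one in the split (e.g.\ whether the midpoint target belongs to the left or right block) is the easiest way to break either correctness or the depth bound. Since this is folklore, I would keep the argument brief and mainly cite \cite{FFG06,BK09}, presenting the doubling construction only at the level of detail needed to make the depth and size bounds self-evident.
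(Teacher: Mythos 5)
The paper itself gives no proof of Lemma~\ref{th:fanout}; it is cited as folklore to \cite{FFG06,BK09} with a one-line allusion to divide-and-conquer. So the real question is whether your construction is sound, and unfortunately it is not: the binary-tree broadcast you describe implements fan-out only onto \emph{clean} targets, whereas $\Fg_1^{(n)}$ as defined in Equation~\ref{eq:fanout} acts on arbitrary $x_i$ and must produce $x_i \oplus c$. After your initial $\CNOT$ from the control onto $x_{\lceil m/2\rceil-1}$, that qubit holds $x_{\lceil m/2\rceil-1}\oplus c$, not $c$; feeding it as the control of the second recursive block therefore XORs the garbage term $x_{\lceil m/2\rceil-1}$ into every target of that block, so correctness fails whenever $x_{\lceil m/2\rceil-1}=1$. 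Your size bound $S(m)=m$ is also a red flag: in any $\CNOT$ circuit for $\Fg_1^{(n)}$ with exactly $n$ gates, each target is hit exactly once and at that moment its control wire must carry exactly the value $c$, which only the original control qubit does (every other wire that has been touched carries $x_j\oplus c$); hence all $n$ gates share the same control and the depth is forced to be $n$. Logarithmic depth genuinely costs more than $n$ gates.

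The standard folklore construction avoids this by conjugation rather than broadcast: apply one parallel layer of $\CNOT\Rbra{x_{2i}, x_{2i-1}}$ over all pairs, recursively fan $c$ out onto the $\lceil n/2\rceil$ even-indexed targets, then apply the same parallel layer again. The second application turns $x_{2i-1}\oplus x_{2i}$ into $x_{2i-1}\oplus x_{2i}\oplus (x_{2i}\oplus c)=x_{2i-1}\oplus c$, so dirtiness cancels. This gives $D(n)=D(\lceil n/2\rceil)+2=\bigO{\log n}$ and $S(n)\le 2n$, with no ancilla. Equivalently, the paper's identity $\Fg_1^{(n)} = \Rbra{\Ig \otimes \Lg_1^{(n-1)}}^\dagger \circ \Lg_1^{(n)}$ (Equation~\ref{eq:fanout_ladder}) together with Lemma~\ref{thm:ladder1} yields the same bounds; your telescoping intuition is right, but the cancellation has to happen \emph{after} the recursive call, not be assumed away inside it.
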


As its name suggests, the $\fanout_1$ operator processes a number of bits and fans out one of them, designated as the control, by performing a XOR operation with each of the others. This operator can be simply seen as a wall of $\Xg$ gates, whose execution is conditioned by the control bit.

In Section~\ref{sec:ctrl_adder}, we will need to consider an extended version of this operator, namely by considering a wall of $\CNOT$ gates in place of the wall of $\Xg$ gates. That is to say, the question that is posed here is whether a circuit of logarithmic depth can be constructed to implement the following operator, which appears to be novel in the literature:

\begin{definition}
    Let $c \in \{0,1\}$ and $x_i, y_i \in \{0,1\} \; \forall i \in [\![0,n-1]\!]$. We define $\fanout_2$ on $2n+1$ qubits as the operator $\Fg_2^{(n)}$ with the following action:
    \begin{equation}\label{eq:fanout2}
        \Fg_2^{(n)} \Rbra{\ket{c} \otimes \bigotimes_{i=0}^{n-1} \ket{x_i} \ket{y_i}} \eqdef \ket{c} \otimes \Rbra{\bigotimes_{i=0}^{n-1} \ket{x_i} \ket{y_i \oplus c x_i}}
    \end{equation}
\end{definition}

A naive implementation would indeed consist in successively applying $n$ Toffoli gates, all of which share one same control bit (hence the necessity of applying them successively).

\subsection{MCX ladders}

We call $\ladder_1$ the operator which can naively be implemented by means of a $\CNOT$ ladder. This operator is a building block in quantum arithmetic within ripple-carry adders \cite{CDKM04,TTK10}, which are in turn essential components of numerous quantum algorithms \cite{Sho97,Reg04,CWM12,LZX20}.
This ladder of $\CNOT$ gates also appears in quantum circuits that perform binary field multiplication modulo some irreducible primitive polynomials~\cite{Van25}.
Similarly, we call $\ladder_2$ the operator which can naively be implemented by means of a ladder of Toffoli gates and which can also be found in ripple-carry adders. 
We properly define these two operators in the following and discuss how they relate to the $\fanout_1$ operator and to the implementation of $\MCX$ gates.

\subsubsection{CNOT ladder.}

Let us properly define $\ladder_1$ first. 

\begin{definition}
    Let $x_i \in \{0,1\} \; \forall i \in [\![0,n]\!]$. We define $\ladder_1$ on $n+1$ qubits as the operator $\Lg_1^{(n)}$ with the following action:
    \begin{equation}\label{eq:ladder1}
        \Lg_1^{(n)} \Rbra{\bigotimes_{i=0}^{n} \ket{x_i}}  \eqdef \ket{x_0} \otimes \Rbra{\bigotimes_{i=1}^{n} \ket{x_i \oplus x_{i-1}}}
    \end{equation}
\end{definition}

A naive implementation of this operator employs a linear number of sequentially applied $\CNOT$ gates (as shown for example in Figure~\ref{circ:TTK}: $\Lg_1^{(4)}$ in Slice 2 and $\Rbra{\Lg_1^{(3)}}^\dagger$ in Slice 6), resulting in a linear depth in terms of $\CNOT$ gates. However, to the best of our knowledge, this operator has never been studied in the literature. Section~\ref{sec:ladder1} fills this gap.

\paragraph{On the link between Fan-out and Ladder$_1$.}

The $\fanout_1$ operator and $\ladder_1$ are related by the following equation:
\begin{equation}\label{eq:fanout_ladder}
    \Fg_1^{(n)} = \Rbra{\Ig \otimes \Lg_1^{(n-1)}}^\dagger \circ \Lg_1^{(n)} 
\end{equation}
where $\Ig$ denotes the identity.
It follows from this equality that there is a circuit implementing $\Fg_1^{(n)}$ that costs no more than twice the implementation cost of $\Lg_1^{(n)}$.
While the $\fanout_1$ operator can be implemented by applying a constant number of $\ladder_1$ operators via Equation~\ref{eq:fanout_ladder}, the converse does not hold.
Indeed, over $n$ qubits, the $\ladder_1$ operator acts non-trivially on $n-1$ qubits, whereas the $\fanout_1$ operator acts non-trivially on only $1$ qubit.
This implies that the $\ladder_1$ operator cannot be implemented by applying only a constant number of $\fanout_1$ operators.
Despite this, we show in Section~\ref{sec:ladder1} that the $\ladder_1$ operator can be implemented with a logarithmic depth complexity, which matches the depth complexity of the $\fanout_1$ operator.

\subsubsection{Toffoli ladder and generalization.}

We now generalize the definition of $\Lg_1$ and define the $\Lg_2$ operator, corresponding to the action of a Toffoli ladder. 

\begin{definition}
    Let $x_i \in \{0,1\} \; \forall i \in [\![0,2n]\!]$. We define $\ladder_2$ on $2n+1$ qubits as the operator $\Lg_2^{(n)}$ with the following action:
    \begin{equation}\label{eq:ladder2}
        \Lg_2^{(n)} \Rbra{\bigotimes_{i=0}^{2n} \ket{x_i}} \eqdef \ket{x_0} \otimes \Rbra{\bigotimes_{i=1}^{n} \ket{x_{2i-1}} \ket{x_{2i} \oplus x_{2i-2}x_{2i-1}}}.
    \end{equation}
\end{definition}

In the literature, no references seem to exist that discuss another way of implementing the $\Lg_2^{(n)}$ operator than the straightforward linear-depth manner of sequentially applying $n$ Toffoli gates (as shown for example in Figure~\ref{circ:TTK}; $\Rbra{\Lg_2^{(5)}}^\dagger$ in Slice 3 and $\Lg_2^{(4)}$ in Slice 5). That said, it is noteworthy that Toffoli ladders appear quite naturally when attempting to replace a multi-controlled $\Xg$ gate with a circuit involving only Toffoli gates (see Gidney's website \cite{Gid15}). In other words, the $\ladder_2$ operator is closely related to the $\MCX$ gates. Finally, it is worth noting once again that these Toffoli ladders appear in addition circuits.

More generally, we can define $\Lg_{\bs \alpha}$ as an operator associated with a ladder composed of multiple $\MCX$ gates. 
Here, $\bs \alpha$ is a vector of integers that specifies the control and target qubits in the ladder. 
Specifically, the $i$-th $\MCX$ gate in the ladder has $\alpha_i$ as its target qubit and is controlled by all the qubits between $\alpha_{i-1}$ and $\alpha_i$.
\begin{definition}
    Let $\bs\alpha$ be a vector of $k-1$ integers satisfying $0 \leq \alpha_0 < \alpha_1 < \ldots < \alpha_{k-2}$, then we call $\ladder_{\bs\alpha}$ on $\alpha_{k-2}+1$ qubits the operator:
    \begin{equation}\label{eq:ladder_alpha}
        \Lg_{\bs\alpha} \ket{\bs x} = \left(\bigotimes_{i=0}^{\alpha_0-1} \ket{x_{i}}\right) \lvert x_{\alpha_{0}} \oplus \prod_{j=0}^{\alpha_0 - 1} x_j \rangle \bigotimes_{i=1}^{k-2} \left(\bigotimes_{j=\alpha_{i-1}+1}^{\alpha_i-1} \ket{x_{j}}\right) \lvert x_{\alpha_{i}} \oplus \prod_{j=\alpha_{i-1}}^{\alpha_i - 1} x_j \rangle.
    \end{equation}
\end{definition}
Note that $\Lg_{\bs \alpha} = \Lg^{(n)}_{1}$ in the case where $\bs \alpha = (1, 2, 3, \ldots, n)$, and  $\Lg_{\bs \alpha} = \Lg^{(n)}_{2}$ in the case where $\bs \alpha = (2, 4, 6, \ldots, 2n)$.

\subsubsection{Logarithmic-depth implementation of MCX gates.}

It has recently been shown by Nie $\etal$ \cite{NZS24} that $\MCX_n$ gates can be implemented in logarithmic depth (in $n$) with one clean ancilla. Khattar and Gidney \cite{KG24} then extended this work and showed that we can implement $\MCX_n$ gates with a circuit with $\bigO{n}$ Toffoli gates and $\bigO{\log n}$ depth using no more than two dirty ancillae. These works share the same idea of what is called the technique of \textit{conditionally clean ancillae}. We specifically recall in Theorem~\ref{th:MCX} a result of \cite{KG24} that we will use later in this paper.

\begin{theorem}[Section 5.4 in \cite{KG24}]\label{th:MCX} 
    The $\MCX_n$ operator can be implemented with only Toffoli and $\Xg$ gates in depth $\bigO{\log n}$ and size $\bigO{n}$, with two dirty ancilla qubits.
\end{theorem}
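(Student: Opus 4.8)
The plan is to follow the conditionally-clean-ancilla strategy of \cite{NZS24,KG24}. The skeleton is a balanced binary AND-tree: to flip the target iff $\prod_{i} x_i = 1$, split the $n$ controls into two halves, recursively compute the AND of each half into a scratch qubit, combine the two scratch qubits with a single Toffoli onto the target, and then uncompute the two half-ANDs. Laid out directly, this recursion already has depth $\bigO{\log n}$ and uses $\bigO{n}$ Toffoli gates, but it also consumes $\Theta(n)$ scratch qubits; the whole content of the theorem is to push the scratch down to two \emph{dirty} qubits without spoiling the depth. (The $\CNOT$-only fan-out of Lemma~\ref{th:fanout} can be used to broadcast a control cheaply where several subtrees need it, but it is the AND structure, not copying, that forces the ancilla reuse.)

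First I would set up the ``dirty $\Rightarrow$ effectively clean'' reduction. The standard toggle identity $\MCX_{A \cup B \to t} = \MCX_{B \to g}\,\MCX_{A \cup \{g\} \to t}\,\MCX_{B \to g}\,\MCX_{A \cup \{g\} \to t}$ shows that a single \emph{dirty} qubit $g$ can be inserted to split one multi-controlled gate into two gates with fewer controls: the two spurious $g$-dependent contributions to the target cancel and $g$ is returned to its initial value, so $g$ is as useful as a clean ancilla up to a constant factor. One of the two dirty qubits in the statement plays exactly this role inside each node of the tree; the second is needed so that splits can be performed on both halves at once and the recursion stays logarithmic in depth rather than quadratic in the logarithm.

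The crux --- and the step I expect to be the main obstacle --- is the conditionally-clean breeding lemma that lets one (effectively clean) qubit generate a linear amount of workspace usable \emph{in parallel}. The guiding observation is that a circuit realises $\MCX_n$ as soon as it flips $t$ on the subspace where every control is $\ket 1$ and acts as the identity on the orthogonal complement; hence any control already pinned to $\ket 1$ on the branch under consideration, or any freshly computed partial-AND bit (which is $\ket 0$ precisely on the branch where it ``should'' be $0$), can be reused as scratch, because on the complementary branch the construction is merely required to be trivial and the extra actions are arranged to cancel. Turning this into a clean induction requires careful bookkeeping: tracking, for each borrowed qubit, the exact conditional subspace on which it is $\ket 0$; checking that the Toffolis scheduled together at each of the $\bigO{\log n}$ levels act on pairwise-disjoint qubits and on mutually compatible subspaces; and verifying that the uncomputation sweep restores every borrowed qubit, the two dirty ancillae included. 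Once this is in place, a direct gate count gives depth $\bigO{\log n}$, size $\bigO{n}$, and exactly two dirty ancillae. I would also note that one cannot take a shortcut through the $\ladder_2$ operator studied later in the paper, since the logarithmic-depth implementation of $\ladder_2$ itself invokes this very theorem.
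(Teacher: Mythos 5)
First, a point of comparison: the paper does not prove this statement at all --- Theorem~\ref{th:MCX} is imported verbatim from Section~5.4 of \cite{KG24} --- so the benchmark for your attempt is the construction in that reference, not anything internal to this paper.

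Your sketch correctly names the two ingredients of that construction (the toggle-detection identity that lets a dirty qubit substitute for a clean one, and the conditionally-clean-ancilla mechanism), but the plan as written both fails quantitatively and defers the essential step. The quantitative failure: you propose to apply the toggle-detection identity ``inside each node'' of a balanced binary AND-tree, recycling the two dirty qubits. That identity replaces one gate by four, and those four gates pairwise share either $g$ or $t$, so they must be applied sequentially; a recursion that invokes it at every node gives $C(n) \approx 4\,C(n/2)$ and $D(n) \approx 4\,D(n/2)$, i.e.\ $\Theta(n^2)$ size \emph{and} depth. Independently, a balanced tree has $2^{\ell}$ nodes at level $\ell$ all needing scratch space, so with only two dirty ancillae the parallelism per level is capped at two and the depth of the bottom level alone is $\Omega(n)$. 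In \cite{KG24} the toggle-detection trick is used only a constant number of times, to turn the two borrowed ancillae into effectively clean workspace at the top; all further workspace comes from the control qubits themselves, which become conditionally clean after a first layer of $\lfloor n/2 \rfloor$ parallel Toffolis writes pairwise partial products into half of them, and the recursion then descends on those $n/2$ qubits. That breeding-and-reuse argument --- that the freshly written partial-product qubits can serve in parallel as targets at the next level, that the layers act on disjoint qubits, and that the uncomputation sweep restores every borrowed qubit --- is precisely what you flag as ``the crux'' and then leave as ``careful bookkeeping''. It is the entire content of the theorem; without it the proposal is an outline of \cite{KG24}, not a proof, and the outline's tree-plus-toggle skeleton would have to be replaced, not merely filled in.
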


This result will be crucial in Section~\ref{sec:ladder2}, where it will be employed to prove that we can implement $\Lg_2^{(n)}$ with a circuit that has a Toffoli-depth in $\bigO{\log^2 n}$, $\bigO{n \log n}$ Toffoli gates and no ancilla. 
\section{Logarithmic-depth implementation of the CNOT ladder operator}\label{sec:ladder1}

In this section, we present a logarithmic-depth implementation of the $\CNOT$ ladder operator, introduced in Section~\ref{sec:preli} and denoted by $\Lg_1^{(n-1)}$ over $n$ qubits.
Consider the pseudocode presented in Algorithm~\ref{algo:ladder1}, in which $::$ denotes the concatenation operator.

\begin{algorithm}[!ht]
\caption{Logarithmic-depth $\CNOT$ circuit synthesis for $\Lg_1^{(n-1)}$}\label{algo:ladder1}
\begin{algorithmic}[1]
\Require A list $X=X_0, \ldots, X_{n-1}$ of $n$ qubits.
\Ensure A circuit implementing the $\ladder_1$ operator over the qubits $X$.
\Procedure{Ladder$_1$-Synth}{$X$}
    \If{$n = 1$}
        \State \Return empty circuit
    \EndIf
    \If{$n = 2$}
        \State \Return $\CNOT\Rbra{X_0, X_1}$
    \EndIf
    \State $X' \gets X_1$
    \State $C_R \gets \CNOT\Rbra{X_0, X_1}$
    \State $C_L \gets \CNOT\Rbra{X_{n-2}, X_{n-1}}$
    \For{$i=1$ to $\ceil{n/2}-2$}
        \State $C_L \gets C_L :: \CNOT\Rbra{X_{2i - 1}, X_{2i}}$
        \State $C_R \gets C_R :: \CNOT\Rbra{X_{2i}, X_{2i + 1}}$
        \State $X' \gets X' :: X_{2i+1}$
    \EndFor
    \If{$n$ is even}
        \State $X' \gets X' :: X_{n-2}$
    \EndIf
    \State \Return $C_L$ :: \Call{Ladder$_1$-Synth}{$X'$} :: $C_R$
\EndProcedure
\end{algorithmic}
\end{algorithm}

Algorithm~\ref{algo:ladder1} constructs two $\CNOT$ circuits of depth 1, denoted by $C_L$ and $C_R$, and performs a recursive call on $\floor{n/2}$ qubits (denoted by $X'$) to produce a subcircuit that is inserted between $C_L$ and $C_R$.
For illustration, Figure~\ref{circ:ladder1} provides an example of the $\CNOT$ circuit resulting from this procedure when $n=10$.

\begin{figure}[!ht]
    \centering
    \includegraphics[width=.9\textwidth]{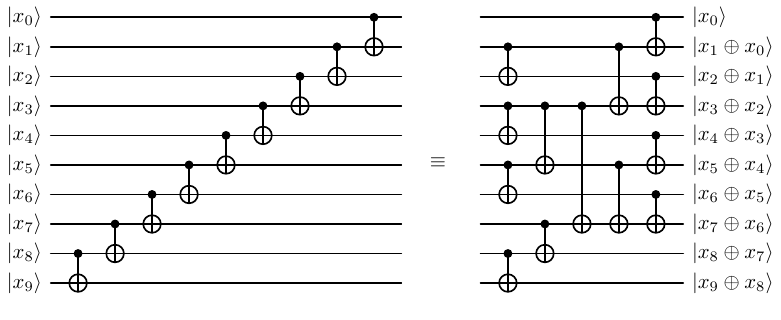}
    \caption{On the left, a linear-depth circuit implementing the $\Lg_1^{(9)}$ operator. On the right, an equivalent logarithmic-depth circuit produced by Algorithm~\ref{algo:ladder1}.} \label{circ:ladder1}
\end{figure}

The exact $\CNOT$-depth and $\CNOT$-count of the circuit produced by Algorithm~\ref{algo:ladder1} are stated in Lemma~\ref{thm:ladder1}.

\begin{lemma}\label{thm:ladder1}
    Let $n \ge 2$ be an integer. The circuit produced by Algorithm~\ref{algo:ladder1} implements $\Lg_1^{(n-1)}$ with a $\CNOT$-depth of 
    \[
	\floor{\log n} + \floor{\log \frac{2n}{3}} \qquad (\le 2\floor{\log n})
    \] 
    and a $\CNOT$-count of
    \[
        2n - 2 - \floor{\log n} - \floor{\log \frac{2n}{3}}.
    \]
\end{lemma}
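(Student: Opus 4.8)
The plan is to run a strong induction on $n$, proving at once that the circuit produced by Algorithm~\ref{algo:ladder1} implements $\Lg_1^{(n-1)}$, that its $\CNOT$-depth (write $d(n)$ for it) equals $\floor{\log n}+\floor{\log\frac{2n}{3}}$, and that its $\CNOT$-count equals $2n-2-\floor{\log n}-\floor{\log\frac{2n}{3}}$. The base cases are checked by inspection: $n=1$ gives the empty circuit, $n=2$ gives $\CNOT(X_0,X_1)=\Lg_1^{(1)}$, and $n=3$ gives $\CNOT(X_1,X_2)::\CNOT(X_0,X_1)$ (the recursive call being on a single qubit, it returns the empty circuit). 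I would keep $n=3$ as a base case for the depth/count part too, since the closed forms only agree with the recursion from $n\ge 4$ downwards.

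For correctness I would propagate a computational-basis state $\ket{x_0,\dots,x_{n-1}}$ through the three stages. The block $C_L$ overwrites $X_{n-1}$ with $x_{n-1}\oplus x_{n-2}$ and every even-indexed $X_{2i}$ (for $1\le i\le\ceil{n/2}-2$) with $x_{2i}\oplus x_{2i-1}$, and leaves $X_0$ and all remaining qubits untouched; in particular the qubits of $X'=(X_1,X_3,\dots)$, with $X_{n-2}$ appended when $n$ is even, still carry their original values. By the induction hypothesis the recursive subcircuit realises $\Lg_1^{(m-1)}$ on $X'$ (where $m=\floor{n/2}$), so it sends $X_{2j+1}\mapsto x_{2j+1}\oplus x_{2j-1}$ (and, when $n$ is even, $X_{n-2}\mapsto x_{n-2}\oplus x_{n-3}$). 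Finally $C_R$ replaces $X_1$ by $X_1\oplus X_0$ and each $X_{2j+1}$ by $X_{2j+1}\oplus X_{2j}$; since $X_{2j}$ now holds $x_{2j}\oplus x_{2j-1}$, one gets $x_{2j+1}\oplus x_{2j-1}\oplus(x_{2j}\oplus x_{2j-1})=x_{2j+1}\oplus x_{2j}$, exactly the desired output, and a short case check on the boundary qubit ($X_{n-1}$ when $n$ is odd, $X_{n-2}$ when $n$ is even) and on the parity of $n$ completes the step.

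For the depth and count, note that $C_L$ and $C_R$ are each composed of $\CNOT$ gates on pairwise-disjoint pairs of qubits, hence each has depth $1$ and exactly $\ceil{n/2}-1$ gates; this yields the size recurrence $T(n)=2(\ceil{n/2}-1)+T(\floor{n/2})$ and the depth upper bound $d(n)\le d(\floor{n/2})+2$. Since $\ceil{n/2}+\floor{n/2}=n$, the count claim reduces to $T(n)=2n-2-d(n)$ once one also has the matching lower bound $d(n)\ge d(\floor{n/2})+2$. This lower bound is the delicate point, because the depth of a concatenation is only bounded above by the sum of depths; I would therefore strengthen the inductive hypothesis to assert that the circuit on qubits $Z_0,\dots,Z_{k-1}$ has a critical gate-chain of length $d(k)$ whose first gate acts on $Z_{k-1}$ and whose last gate is exactly $\CNOT(Z_0,Z_1)$. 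In the inductive step one then prepends $\CNOT(X_{n-2},X_{n-1})\in C_L$ — which shares $X_{n-2}=X'_{m-1}$ with the head of the recursive chain — and appends $\CNOT(X_0,X_1)\in C_R$ — which shares $X_1=X'_0$ with its tail — obtaining a chain of length $d(\floor{n/2})+2$ with the required endpoints, so that $d(n)=d(\floor{n/2})+2$ and the strengthened hypothesis is preserved.

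It remains to turn the recursion $d(n)=d(\floor{n/2})+2$ into the stated closed form. For this I would use the two elementary identities $\floor{\log\floor{n/2}}=\floor{\log n}-1$ for $n\ge 2$ and $\floor{\log\frac{2\floor{n/2}}{3}}=\floor{\log\frac{2n}{3}}-1$ for $n\ge 4$ (the second fails exactly at $n=3$, which is why $n=3$ is kept as a base case), which together give $d(\floor{n/2})+2=\floor{\log n}+\floor{\log\frac{2n}{3}}$, and then substitute into $T(n)=2n-2-d(n)$ to recover the count. I expect the two genuine obstacles to be the bookkeeping in the correctness step (the parity split and the boundary qubit) and the need for the strengthened inductive hypothesis to obtain the depth lower bound; everything else is a routine induction.
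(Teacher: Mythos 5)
Your proposal is correct and follows essentially the same route as the paper: correctness by tracking a basis state through the decomposition $C_L$, recursive subcircuit on $X'$, $C_R$, and then the recurrences $D(n)=2+D(\floor{n/2})$ and $C(n)=2\floor{\frac{n-1}{2}}+C(\floor{n/2})$ with base cases at $n=2,3$, solved to the stated closed forms. The only substantive addition is your strengthened inductive hypothesis (the critical chain ending in $\CNOT(X_0,X_1)$ and starting on the last qubit) to justify the depth \emph{lower} bound exactly, a point the paper passes over by asserting the depth recurrence directly; this extra care is sound and compatible with the paper's argument.
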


\begin{proof}
    We first prove that the circuit produced by Algorithm~\ref{algo:ladder1} implements $\Lg_1^{(n-1)}$.
    For the base case where $n = 1$, $\Lg_1^{(0)}$ is equal to the identity operator, which corresponds to the empty circuit returned by Algorithm~\ref{algo:ladder1}.
    For the other base case where $n = 2$, the algorithm produces a circuit containing a single $\CNOT$ gate, which corresponds to the implementation of the $\Lg_1^{(1)}$ operator:
    \begin{equation}
        \CNOT \ket{x_0}\ket{x_1} = \ket{x_0}\ket{x_0 \oplus x_1} = \Lg_1^{(1)}\ket{x_0}\ket{x_1}.
    \end{equation}
    For the other cases where $n > 2$, Algorithm~\ref{algo:ladder1} constructs two circuits, $C_L$ and $C_R$, and performs a recursive call with a subset of qubits $X'$, which produces a circuit that we denote by $C_{X'}$.
    The circuit produced by Algorithm~\ref{algo:ladder1} is then the result of the concatenation of the circuits $C_L$, $C_{X'}$, and $C_R$.
    Let $\Ug_L$, $\Ug_{X'}$, and $\Ug_R$ be the unitary operators associated with the circuits $C_L$, $C_{X'}$, and $C_R$, respectively.
        The action of the $\Ug_L$ operator on an $n$-dimensional computational basis state $\ket{\bs x}$ is as follows:
    \begin{equation*}
        \Ug_L\ket{\bs x} = \ket{x_0}\ket{x_1} \left(\bigotimes_{i=1}^{\ceil{\frac{n}{2}}-2} \ket{x_{2i-1} \oplus x_{2i}}\ket{x_{2i+1}}\right) \left(\bigotimes_{n\bmod{2}}^{0} \ket{x_{n-2}}\right) \ket{x_{n-2} \oplus x_{n-1}}
    \end{equation*}
    The operator $\Ug_{X'}$ implements the $\Lg^{(\floor{n/2}-1)}_1$ operator on the qubits $X'$, which corresponds to the sequence of qubits $X_{2i+1}$ for all $i$ satisfying $0 \leq i \leq \ceil{\frac{n}{2}}-2$, followed by $X_{n-2}$ when $n$ is even.
    Thus, the action of the $\Ug_{X'}$ operator on an $n$-dimensional computational basis state $\ket{\bs x}$ is as follows:
    \begin{equation*}
        \Ug_{X'}\ket{\bs x} = \ket{x_{0}}\ket{x_{1}} \left(\bigotimes_{i=1}^{\ceil{\frac{n}{2}}-2} \ket{x_{2i}}\ket{x_{2i-1} \oplus x_{2i+1}}\right) \left(\bigotimes_{n\bmod{2}}^{0} \ket{x_{n-3} \oplus x_{n-2}}\right) \ket{x_{n-1}}
    \end{equation*}
    And the action of the $\Ug_R$ operator on an $n$-dimensional computational basis state $\ket{\bs x}$ is as follows:
    \begin{equation*}
        \Ug_R\ket{\bs x} = \ket{x_{0}}\ket{x_{0} \oplus x_{1}} \left(\bigotimes_{i=1}^{\ceil{\frac{n}{2}}-2} \ket{x_{2i}}\ket{x_{2i} \oplus x_{2i+1}}\right) \left(\bigotimes_{n\bmod{2}}^{0} \ket{x_{n-2}}\right) \ket{x_{n-1}}
    \end{equation*}
    By putting these equations together, the operation performed by the $\Ug_R \Ug_{X'} \Ug_L$ operator can be described as follows:
    \begin{equation*}
        \begin{aligned}
            &\Ug_R\Ug_{X'}\Ug_L\ket{\bs x} \\
            &= \Ug_R \Ug_{X'} \left[\ket{x_0}\ket{x_1} \left(\bigotimes_{i=1}^{\ceil{\frac{n}{2}}-2} \ket{x_{2i-1} \oplus x_{2i}} \ket{x_{2i+1}}\right) \left(\bigotimes_{n\bmod{2}}^{0} \ket{x_{n-2}}\right) \ket{x_{n-2} \oplus x_{n-1}}\right] \\
            &= \Ug_R \left[\ket{x_{0}}\ket{x_{1}} \left(\bigotimes_{i=1}^{\ceil{\frac{n}{2}}-2} \ket{x_{2i-1} \oplus x_{2i}} \ket{x_{2i-1} \oplus x_{2i+1}}\right) \left(\bigotimes_{n\bmod{2}}^{0} \ket{x_{n-3} \oplus x_{n-2}}\right) \ket{x_{n-2} \oplus x_{n-1}}\right] \\
            &= \ket{x_{0}}\ket{x_{0} \oplus x_{1}} \left(\bigotimes_{i=1}^{\ceil{\frac{n}{2}}-2} \ket{x_{2i-1} \oplus x_{2i}} \ket{x_{2i} \oplus x_{2i+1}}\right) \left(\bigotimes_{n\bmod{2}}^{0} \ket{x_{n-3} \oplus x_{n-2}}\right) \ket{x_{n-2} \oplus x_{n-1}} \\
            &= \ket{x_{0}}\bigotimes_{i=1}^{n-1} \ket{x_i \oplus x_{i-1}} \\
            &= \Lg_1^{(n-1)}\ket{\bs x}
        \end{aligned}
    \end{equation*}
    Thus, the circuit produced by Algorithm~\ref{algo:ladder1}, associated with the operator $\Ug_R\Ug_{X'}\Ug_L$, produces a circuit implementing the $\Lg_1^{(n-1)}$ operator.

    The $\CNOT$-depth of the $\Ug_L$ and $\Ug_R$ circuits is exactly one, because all the $\CNOT$ gates in the circuits $C_L$ and $C_R$ are applied on different qubits.
    Therefore, the depth $D(n)$ of the circuit produced by Algorithm~\ref{algo:ladder1} is
    \begin{equation}\label{eq:depth_ladder1}
        D(n) = 2 + D(\floor{n/2})
    \end{equation}
    with the initial conditions $D(2) = 1$ and $D(3)=2$.
    Solving this recurrence relation yields
    \begin{equation}
        D(n) = \floor{\log{n}} + \floor{\log{\frac{2n}{3}}}.
    \end{equation}
    Similarly, regarding the cost, the number of $\CNOT$ gates in the $C_L$ and $C_R$ circuits is
    \begin{equation}
        \floor{\frac{n-1}{2}},
    \end{equation}
    which implies that the number of $\CNOT$ gates in the circuit produced by Algorithm~\ref{algo:ladder1} is
    \begin{equation}\label{eq:cnot_ladder1}
        C(n) = 2\floor{\frac{n-1}{2}} + C\Rbra{\floor{\frac{n}{2}}}
    \end{equation}
    with the initial conditions $C(2) = 1$ and $C(3)=2$.
    Solving this recurrence relation yields
    \begin{equation}
        C(n) = 2 n - 2 - \floor{\log{n}} - \floor{\log{\frac{2n}{3}}}.
    \end{equation}
\end{proof}
\section{Polylogarithmic-depth implementation of the Toffoli ladder operator}\label{sec:ladder2}

In this section, we present a polylogarithmic-depth implementation of the Toffoli ladder operator, introduced in Section~\ref{sec:preli} and denoted by $\Lg_2^{(n)}$ over $2n+1$ qubits.

We begin by presenting an algorithm for implementing the $\Lg_{\bs \alpha}$ operator (also introduced in Section~\ref{sec:preli}) with logarithmic depth using $\MCX$ gates.
Consider the pseudocode presented in Algorithm~\ref{algo:ladder_alpha}, where $::$ denotes the concatenation operator.

\begin{algorithm}[!ht]
\caption{Logarithmic-depth $\MCX$ circuit synthesis for $\Lg_{\bs \alpha}$} \label{algo:ladder_alpha}
\begin{algorithmic}[1]
\Require A vector $\bs \alpha$ of $k-1$ integers associated with the $\Lg_{\bs \alpha}$ operator, and a list $X=X_0, \ldots, X_{\alpha_{k-2}}$ of $\alpha_{k-2}+1$ qubits.
\Ensure A circuit implementing the $\ladder_{\bs \alpha}$ operator over the qubits $X$.
\Procedure{Ladder$_\alpha$-Synth}{$X$, $\bs \alpha$}
    \If{$k = 1$}
        \State \Return empty circuit
    \EndIf
    \If{$k = 2$}
        \State \Return $\MCX\Rbra{X_{0}, \ldots, X_{\alpha_0}}$ \Comment{$\MCX$ gate with controls $X_{i}$ where $0 \leq i < \alpha_0$, and target $X_{\alpha_0}$.}
    \EndIf
    \State $X' \gets X_{\alpha_0}$
    \State $\bs \alpha' \gets \text{empty vector of integers}$
    \State $C_R \gets \MCX\Rbra{X_{0}, \ldots, X_{\alpha_0}}$
    \State $C_L \gets \MCX\Rbra{X_{\alpha_{k-3}}, \ldots, X_{\alpha_{k-2}}}$
    \For{$i=1$ to $\ceil{k/2}-2$}
        \State $C_L \gets C_L :: \MCX\Rbra{X_{\alpha_{2i-2}}, \ldots, X_{\alpha_{2i-1}}}$
        \State $C_R \gets C_R :: \MCX\Rbra{X_{\alpha_{2i-1}}, \ldots, X_{\alpha_{2i}}}$
        \State $X' \gets X' :: \left[X_{\alpha_{2i-2}+1}, \ldots, X_{\alpha_{2i-1}-1}, X_{\alpha_{2i-1}+1}, \ldots, X_{\alpha_{2i}} \right]$
        \State $\bs \alpha' \gets \bs \alpha' :: \alpha_{2i} - \alpha_0 - i$
    \EndFor
    \If{$k$ is even}
        \State $X' \gets X' :: \left[X_{\alpha_{k-4}+1}, \ldots, X_{\alpha_{k-3}} \right]$
        \State $\bs \alpha' \gets \bs \alpha' :: \alpha_{k-3} - \alpha_0 - k/2 + 2$
    \EndIf
    \State \Return $C_L$ :: \Call{Ladder$_\alpha$-Synth}{$X'$, $\bs \alpha'$} :: $C_R$
\EndProcedure
\end{algorithmic}
\end{algorithm}

Analogously to Algorithm~\ref{algo:ladder1}, the algorithm constructs two $\MCX$ circuits of depth 1, denoted by $C_L$ and $C_R$, and performs a recursive call to produce a subcircuit that is inserted between $C_L$ and $C_R$.
For illustration, Figure~\ref{circ:ladder2} provides an example of the CNOT circuit resulting from this procedure when $\bs \alpha = (2, 4, \ldots, 18)$.

\begin{figure}[!ht]
    \centering
    \includegraphics[width=.9\textwidth]{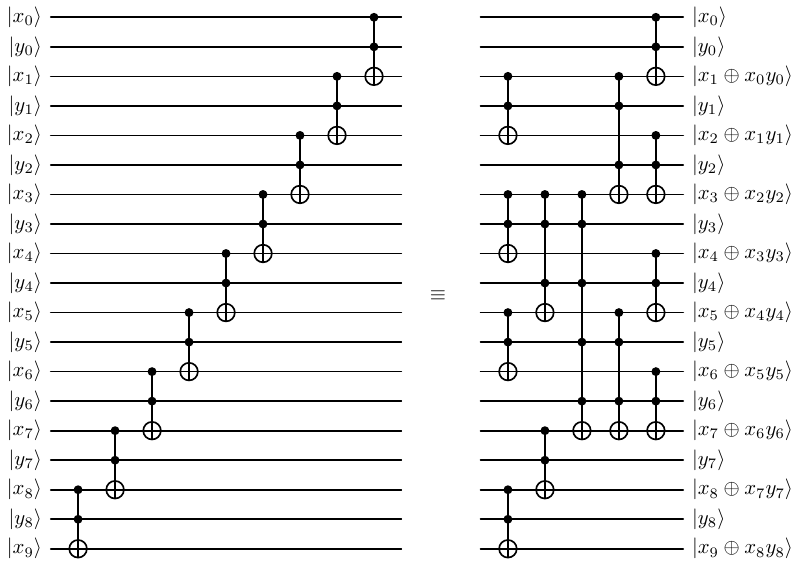}
    \caption{On the left, a linear-depth Toffoli circuit implementing the $\Lg_2^{(9)}$ operator. On the right, an equivalent logarithmic-depth $\MCX$ circuit produced by Algorithm~\ref{algo:ladder_alpha}.} \label{circ:ladder2}
\end{figure}

The exact $\MCX$-depth and $\MCX$-count of the circuit produced by Algorithm~\ref{algo:ladder1} are stated in Lemma~\ref{thm:ladder_alpha}.

\begin{restatable}{lemma}{thmladderalpha} \label{thm:ladder_alpha}
    Let $\bs \alpha$ be a vector of $k-1$ integers, where $k\geq 2$, associated with the $\Lg_{\bs \alpha}$ operator.
    The circuit produced by Algorithm~\ref{algo:ladder_alpha} implements $\Lg_{\bs \alpha}$ with a $\MCX$-depth of 
    \[
	\floor{\log \Rbra{k}} + \floor{\log\Rbra{\frac{2k}{3}}} \qquad (\le 2\floor{\log \Rbra{k}})
    \] 
    and a $\MCX$-count of
    \[
        2k - 2 - \floor{\log \Rbra{k}} - \floor{\log\Rbra{\frac{2k}{3}}}.
    \]
\end{restatable}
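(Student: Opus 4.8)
The plan is to argue in exact parallel with the proof of Lemma~\ref{thm:ladder1}, of which this is the index-generalized version: Algorithm~\ref{algo:ladder_alpha} is obtained from Algorithm~\ref{algo:ladder1} by replacing each $\CNOT$ with an $\MCX$ gate and each running index $i$ with the entry $\alpha_i$. I would establish correctness by induction on $k$, and then read off the $\MCX$-depth and $\MCX$-count from two recurrences that coincide with Equations~\ref{eq:depth_ladder1} and~\ref{eq:cnot_ladder1}.

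For correctness, write $\Lg_{\bs\alpha}$ as the composition of $k-1$ gates $G_0,\ldots,G_{k-2}$, where $G_j$ targets $X_{\alpha_j}$ and is controlled by the qubits $X_i$ with index $i$ in the integer interval $[\alpha_{j-1},\alpha_j-1]$ (with $\alpha_{-1}\eqdef 0$). The base cases $k=1$ (empty circuit, equal to the identity) and $k=2$ (one gate $\MCX\Rbra{X_0,\ldots,X_{\alpha_0}}$, which is $\Lg_{(\alpha_0)}$ by Equation~\ref{eq:ladder_alpha}) are immediate. For $k>2$, the structural fact underpinning everything is that $G_i$ and $G_j$ act on disjoint sets of qubits whenever $|i-j|\ge 2$: each $G_j$ touches only qubits with index in $[\alpha_{j-1},\alpha_j]$, and since $\bs\alpha$ is strictly increasing these intervals are pairwise disjoint for indices of the same parity. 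By construction $C_R$ and $C_L$ collect, respectively, the even- and odd-indexed gates of the ladder — with the endmost gate $G_{k-2}$ always placed in $C_L$, and, when $k$ is even, the single gate $G_{k-3}$ folded into the recursive ladder instead — so both $C_L$ and $C_R$ are walls of pairwise-commuting gates supported on disjoint qubits and hence have $\MCX$-depth $1$. I would then compute the action on an arbitrary basis state $\ket{\bs x}$ of $\Ug_L$, of the recursively produced $\Ug_{X'}$ (which by the induction hypothesis implements $\Lg_{\bs\alpha'}$ on the sublist $X'$), and of $\Ug_R$, compose them as $\Ug_R\Ug_{X'}\Ug_L$, and verify by the same telescoping as in Lemma~\ref{thm:ladder1} that the result is $\Lg_{\bs\alpha}\ket{\bs x}$, the only change being that each single XOR $x_{2i}\oplus x_{2i+1}$ appearing there is now a product $\prod_{j} x_j$ over a contiguous block of indices.

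For the sizes, counting the iterations of the for-loop shows that $\bs\alpha'$ has $\ceil{k/2}-2$ entries, plus one more when $k$ is even, hence $\floor{k/2}-1$ entries in total, so the recursive call is on a $\Lg_{\bs\alpha'}$ operator with parameter $k'=\floor{k/2}$. Since $C_L$ and $C_R$ each contribute depth $1$, the depth obeys $D(k)=2+D\Rbra{\floor{k/2}}$ with $D(2)=1$ and $D(3)=2$ — exactly the recurrence of Equation~\ref{eq:depth_ladder1} — whose solution is $\floor{\log k}+\floor{\log\frac{2k}{3}}$; and since $C_L$ and $C_R$ each contain $\ceil{k/2}-1=\floor{(k-1)/2}$ gates, the count obeys $C(k)=2\floor{(k-1)/2}+C\Rbra{\floor{k/2}}$ with $C(2)=1$, $C(3)=2$, matching Equation~\ref{eq:cnot_ladder1} and solving to $2k-2-\floor{\log k}-\floor{\log\frac{2k}{3}}$. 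The part I expect to be the real obstacle is the index bookkeeping in the recursive branch: one must check that the sublist $X'$ — which retains, in their original order, the qubits that survive after deleting the targets $X_{\alpha_{2i-1}}$ handled inside $C_L$ — together with the shifted vector $\bs\alpha'$, whose entries are $\alpha_{2i}-\alpha_0-i$ and, in the even case, $\alpha_{k-3}-\alpha_0-k/2+2$, really does describe the operator whose composition with $C_L$ on the left and $C_R$ on the right equals $\Lg_{\bs\alpha}$. Concretely this amounts to verifying, for each retained target, that the number of deleted indices preceding it equals the stated shift; this is the one place where the $\MCX$ case needs more care than the $\CNOT$ case, though the divide-and-conquer mechanism is identical.
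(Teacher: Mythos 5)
Your proposal follows essentially the same route as the paper's own proof in Appendix~\ref{sec:proof_ladder_alpha}: induction on $k$ with the same base cases, the decomposition into $\Ug_R\Ug_{X'}\Ug_L$ with the even/odd parity split making $C_L$ and $C_R$ depth-$1$ walls, the recursive parameter $k'=\floor{k/2}$, and the recurrences $D(k)=2+D(\floor{k/2})$ and $C(k)=2\floor{(k-1)/2}+C(\floor{k/2})$ inherited from Lemma~\ref{thm:ladder1}. Your structural claims (disjoint supports of $G_i,G_j$ for $|i-j|\ge 2$, $G_{k-2}$ always in $C_L$, $G_{k-3}$ folded into the recursion when $k$ is even, $\ceil{k/2}-1=\floor{(k-1)/2}$ gates per wall) all check out against Algorithm~\ref{algo:ladder_alpha}, and the index-shift verification you flag as the remaining work is exactly what the paper's appendix carries out explicitly.
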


The proof of Lemma~\ref{thm:ladder_alpha} is similar to the one of Lemma~\ref{thm:ladder1}. \ifarXiv We provide it in Appendix~\ref{sec:proof_ladder_alpha}. \fi

The logarithmic-depth $\MCX$-circuit produced by Algorithm~\ref{algo:ladder_alpha} can be translated into a $\{\text{Toffoli}, \Xg\}$ circuit by using Theorem~\ref{th:MCX}.
For the $\Lg^{(n)}_2$ operator, we obtain a polylogarithmic-depth circuit with a linearithmic number of Toffoli gates, as stated in Lemma~\ref{thm:ladder2}. 

\begin{lemma}\label{thm:ladder2}
    There exists a circuit that implements $\Lg_2^{(n)}$ over the $\{\text{Toffoli}, \Xg\}$ gate set with a depth of $\bigO{\log^2 n}$ and a gate count of $\bigO{n \log n}$, without any ancilla qubits.
\end{lemma}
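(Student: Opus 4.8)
The plan is to reduce the statement to the $\MCX$-level synthesis of Section~\ref{sec:ladder2} and then compile each $\MCX$ gate down to $\{\text{Toffoli},\Xg\}$ using Theorem~\ref{th:MCX}. First I would instantiate Algorithm~\ref{algo:ladder_alpha} with $\bs\alpha=(2,4,6,\ldots,2n)$, for which $\Lg_{\bs\alpha}=\Lg_2^{(n)}$ and $k=n+1$: by Lemma~\ref{thm:ladder_alpha} this produces a circuit on the same $2n+1$ qubits, of $\MCX$-depth $\bigO{\log n}$ and $\MCX$-count $\bigO{n}$, organized as $\bigO{\log n}$ depth-$1$ layers of $\MCX$ gates with pairwise disjoint supports. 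A short look at the recursion shows that each step sends a common gap $g$ to $2g-1$, so that the gates created at recursion level $\ell$ have $\bigO{2^{\ell}}$ controls; in particular every gate is an $\MCX_m$ with $m\ge 2$, hence no plain $\CNOT$ ever appears — which matters since $\CNOT\notin\{\text{Toffoli},\Xg\}$, whereas the Toffoli gates ($m=2$) can simply be kept as they are.

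Next I would replace every remaining $\MCX_m$ gate ($m\ge 3$) by the depth-$\bigO{\log m}=\bigO{\log n}$, size-$\bigO{m}$ circuit of Theorem~\ref{th:MCX}. Correctness is immediate by composition, since the two dirty ancillae demanded by Theorem~\ref{th:MCX} are restored to their initial state and hence may carry live ladder data. For the depth, composing $\bigO{\log n}$ layers, each now of depth $\bigO{\log n}$, gives $\bigO{\log^2 n}$ (more tightly, $\sum_{\ell}\bigO{\log 2^{\ell}}=\bigO{\log^2 n}$). For the gate count, summing $\bigO{m_i}$ over all $\MCX$ gates: level $\ell$ contributes about $n/2^{\ell}$ gates each with $\bigO{2^{\ell}}$ controls, i.e.\ $\bigO{n}$ Toffoli gates, and there are $\bigO{\log n}$ levels, so the total is $\bigO{n\log n}$.

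The hard part is keeping the construction ancilla-free: when the compiled subcircuits of one $\MCX$-layer run in parallel, their dirty ancillae must be pairwise disjoint and disjoint from the gates' own qubits, so they must be borrowed from qubits idle during that layer. The argument I would give is that such qubits are always plentiful. The outermost recursion level consists solely of Toffoli gates, so it needs no ancilla even though it touches all $2n+1$ qubits. Every deeper level $\ell\ge 1$ runs on a proper subset $X^{(\ell)}\subsetneq\{X_0,\dots,X_{2n}\}$, of size $\approx 3n/2$ when $\ell=1$ and shrinking towards $n$; the $\Omega(n)$ qubits outside $X^{(\ell)}$ are idle for the whole of that level's subcircuit, and moreover each depth-$1$ layer inside level $\ell$ (one of the two circuits $C_L,C_R$) touches only ``alternate gaps'' of $X^{(\ell)}$, leaving roughly half of $X^{(\ell)}$ idle too. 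Since every ancilla-hungry gate occupies at least four qubits, such a layer contains $\bigO{n}$ of them, and one checks level by level — tracking $|X^{(\ell)}|$ and the regular gap structure of the ladders produced by Algorithm~\ref{algo:ladder_alpha} — that the idle qubits outnumber twice that count with room to spare; any residual $\bigO{1}$ slack can be absorbed by splitting each $\MCX$-layer into two sublayers that lend each other their supports, which only doubles the depth. This bookkeeping is the only delicate point I expect; everything else is composition together with the counting already sketched, and it yields the claimed $\bigO{\log^2 n}$ depth and $\bigO{n\log n}$ gate count with no ancilla.
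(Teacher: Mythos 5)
Your proposal is correct and follows essentially the same route as the paper: instantiate Algorithm~\ref{algo:ladder_alpha} to write $\Lg_2^{(n)}$ as $\bigO{\log n}$ depth-one layers of disjoint $\MCX$ gates (Lemma~\ref{thm:ladder_alpha}), then compile each $\MCX$ gate in parallel via Theorem~\ref{th:MCX} with two dirty ancillae, yielding depth $\bigO{\log^2 n}$ and $\bigO{n\log n}$ gates. The only difference lies in the ancilla bookkeeping: where you give a global idle-qubit count backed by a sublayer-splitting fallback (which does make the argument complete), the paper uses a simpler local observation — in every layer beyond the first and last, consecutive $\MCX$ gates are separated by at least two qubits on which no gate acts (and $X_0,X_1$ are never reused), so each gate directly borrows its two neighboring idle qubits without any splitting.
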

\begin{proof}
    The first and last layers of the circuit are composed of parallel Toffoli gates, inducing a depth of $2$ and a number of Toffoli gates of $\bigO{n}$. The first two qubits of the circuit are not used in any other layer of the circuit. Moreover, for all the other layers of the circuit, the parallel $\MCX$ gates are all separated by at least two qubits on which no gates are acting. Therefore, for each one of these layers, two different dirty ancillary qubits can be associated to each $\MCX$ gate. Then, based on Theorem~\ref{th:MCX}, we can implement all the $\MCX$ gates in a given layer in parallel over the $\{\text{Toffoli}, \Xg\}$ gate set, with a depth of $\bigO{\log m_i}$ and a gate count of $\bigO{m_i}$ for each gate, where $m_i$ is the number of controls of the $i$-th $\MCX$ gate in the layer.
    We have $\max_i(m_i) \leq n$, which implies that the total depth of the layer is $\bigO{\log n}$, as all the $\MCX$ gates are implemented in parallel.
    Moreover, we have $\sum_i m_i \leq n$, which implies that the total number of $\{\text{Toffoli}, \Xg\}$ gates in the layer is $\bigO{n}$.
    As stated by Lemma~\ref{thm:ladder_alpha}, there are $\bigO{\log n}$ layers of parallel $\MCX$ gates in the initial circuit, which results in a $\{\text{Toffoli}, \Xg\}$ circuit with a depth complexity of $\bigO{\log^2 n}$ and a gate count of $\bigO{n \log n}$.
\end{proof}
\section{Application to Ripple-Carry Addition}\label{sec:adder}

We can now rely on the results established in the previous sections to introduce the main result of this paper: a polylogarithmic-depth and ancilla-free quantum adder using classical reversible logic only.

The ripple-carry adder presented in Algorithm~\ref{algo:adder} is derived from Takahashi's $\etal$ ripple-carry adder \cite{TTK10} by applying the following circuit equality on their original adder:
\begin{equation}
    \begin{quantikz}[align equals at=2, row sep={.5cm,between origins}]
        \qw & \ctrl{2} & \ctrl{1} & \qw \\
        \qw & \ctrl{1} & \targ{} & \qw \\
        \qw & \targ{} & \qw & \qw
    \end{quantikz}
    =
    \begin{quantikz}[align equals at=2, row sep={.5cm,between origins}]
        \qw & \ctrl{1} & \qw & \ctrl{2} & \qw & \qw \\
        \qw & \targ{} & \gate{X} & \ctrl{1} & \gate{X} & \qw \\
        \qw & \qw & \qw & \targ{} & \qw & \qw
    \end{quantikz}
\end{equation}
An example of the circuit produced by Algorithm~\ref{algo:adder} for $n=5$ is provided in Figure~\ref{circ:TTK}, where linear-depth implementations of the $\CNOT$ and Toffoli ladder operators are used. The corresponding pseudocode in Algorithm~\ref{algo:adder} is expressed using the $\Lg_1$ and $\Lg_2$ operators.

\begin{algorithm}[ht]
    \caption{Ancilla-free ripple-carry adder}\label{algo:adder}
    \begin{algorithmic}[1]
        \Require $\ket{a}_{A} \ket{b}_{B} \ket{z}_{Z}$ where $a,b \in [\![0,2^n - 1]\!]$ and $z \in \{0,1\}$.
        \Ensure $\ket{a}_{A} \ket{a+b \mod 2^n}_{B} \ket{z \oplus (a+b)_n}_{Z}$.
        \For{$i = 1$ to $n - 1$} \Comment{Slice 1}
            \State $\CNOT(A_i,B_i)$ 
        \EndFor
        \State Apply $\Lg_1^{(n-1)}$ on $(A_1, \dots, A_{n-1}, Z)$ \Comment{Slice 2}
        \State Apply $\Rbra{\Lg_2^{(n)}}^\dagger$ on $(A_0, B_0, \dots, A_{n-1}, B_{n-1}, Z)$ \Comment{Slice 3}
        \For{$i = 1$ to $n-1$} \Comment{Slice 4}
            \State $\CNOT(A_i,B_i)$
        \EndFor
        \For{$i = 1$ to $n-2$} \Comment{Slice 5}
            \State $\Xg(B_i)$
        \EndFor
        \State Apply $\Lg_2^{(n-1)}$ on $(A_0, B_0, \dots, A_{n-2}, B_{n-2}, A_{n-1})$
        \For{$i = 1$ to $n-2$}
            \State $\Xg(B_i)$
        \EndFor
        \State Apply $\Rbra{\Lg_1^{(n-2)}}^\dagger$ on $(A_1, \dots, A_{n-1})$ \Comment{Slice 6}
        \For{$i = 0$ to $n-1$} \Comment{Slice 7}
            \State $\CNOT(A_i,B_i)$
        \EndFor
    \end{algorithmic}
\end{algorithm}

We establish in Theorem~\ref{thm:adder} that by using the logarithmic-depth circuit to replace the $\CNOT$ ladders (Section~\ref{sec:ladder1}) and the polylogarithmic-depth circuit to replace the Toffoli ladders (Section~\ref{sec:ladder2}), ripple-carry addition inherits this same complexity asymptotically.

\begin{theorem}\label{thm:adder}
    Let $a$ and $b$ be two $n$-bit integers. There exists a circuit that implements the in-place addition of $a$ and $b$, $\ie$, an operator with the following action:
    \[
	\ket{a} \ket{b} \ket{z} \mapsto \ket{a} \ket{a + b \mod 2^n} \ket{z \oplus (a + b)_n}
    \]
	 (where $z \in \{0,1\}$) over the $\{\text{Toffoli}, \CNOT, \Xg\}$ gate set with a depth of $\bigO{\log^2 n}$ and a gate count of $\bigO{n \log n}$, without any ancilla qubits.
\end{theorem}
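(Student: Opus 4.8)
The plan is to establish the correctness of Algorithm~\ref{algo:adder} as a faithful implementation of the ripple-carry adder, and then to bound its depth and gate count by substituting the low-depth implementations of the ladder operators. Since Algorithm~\ref{algo:adder} is obtained from Takahashi $\etal$'s adder \cite{TTK10} by the local circuit rewriting displayed just before the algorithm, the first step is to argue that this rewriting is an operator identity (a direct check on the $2^3$ computational basis states, or simply noting that both sides send $\ket{c}\ket{b}\ket{a}$ to $\ket{c}\ket{b\oplus c}\ket{a \oplus bc}$ after accounting for the two $\Xg$ gates, hence the $\Lg_2$-style ladder with the surrounding $\Xg$ walls in Slices~5 reproduces exactly the original Toffoli pattern). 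Applying this rewriting everywhere it occurs in the original adder, together with the definitions of $\Lg_1^{(n-1)}$ and $\Lg_2^{(n)}$, shows that the sequence of slices in Algorithm~\ref{algo:adder} computes $\ket{a}\ket{b}\ket{z}\mapsto\ket{a}\ket{a+b\bmod 2^n}\ket{z\oplus(a+b)_n}$, inheriting correctness directly from \cite{TTK10}.

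The second step is the complexity bookkeeping, slice by slice. Slices~1, 4, and 7 consist of at most $n$ parallel $\CNOT$ gates acting on disjoint qubit pairs $(A_i,B_i)$, so each has depth $1$ and $\bigO{n}$ gates. The $\Xg$ walls in Slice~5 have depth $1$ and $\bigO{n}$ gates. Slices~2 and~6 are $\CNOT$ ladders $\Lg_1^{(n-1)}$ and $(\Lg_1^{(n-2)})^\dagger$; by Lemma~\ref{thm:ladder1} each can be implemented with $\CNOT$-depth $\bigO{\log n}$ and $\bigO{n}$ gates (the dagger is handled by reversing the circuit, which preserves both quantities). Slices~3 and~5 contain $(\Lg_2^{(n)})^\dagger$ and $\Lg_2^{(n-1)}$; by Lemma~\ref{thm:ladder2} each admits a $\{\text{Toffoli},\Xg\}$ implementation of depth $\bigO{\log^2 n}$ with $\bigO{n\log n}$ gates and no ancilla. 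Summing the seven slices gives a total depth $\bigO{\log^2 n}$ and a total gate count $\bigO{n\log n}$, over the gate set $\{\text{Toffoli},\CNOT,\Xg\}$, with no ancilla qubits introduced at any stage — the ancilla-free guarantee being exactly the content of Lemmas~\ref{thm:ladder1} and~\ref{thm:ladder2}.

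I expect the only subtle point — the ``main obstacle'' — to be verifying that Lemma~\ref{thm:ladder2} actually applies to the specific qubit layouts appearing in Slices~3 and~5, since its proof relies on each $\MCX$ gate in a layer being flanked by at least two idle qubits usable as dirty ancillae. In Slice~5 the ladder $\Lg_2^{(n-1)}$ acts on $(A_0,B_0,\dots,A_{n-2},B_{n-2},A_{n-1})$ while $B_{n-1}$ and $Z$ sit idle, and in Slice~3 the ladder acts on essentially the full register; one must check that the recursive halving in Algorithm~\ref{algo:ladder_alpha} indeed leaves the required spacing, which is precisely why Lemma~\ref{thm:ladder2} is phrased as an existence statement about $\Lg_2^{(n)}$ rather than about an arbitrary sub-layout. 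Once this is granted, everything else is routine addition of logarithms and linear terms, and the theorem follows.
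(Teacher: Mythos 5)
Your proposal is correct and follows essentially the same route as the paper: the paper's proof is exactly the slice-by-slice complexity bookkeeping you carry out, invoking Lemma~\ref{thm:ladder1} for Slices 2 and 6 and Lemma~\ref{thm:ladder2} for Slices 3 and 5, with correctness inherited from the Takahashi et al.\ adder via the displayed circuit identity. Your additional checks — verifying the local rewriting on basis states and noting that Lemma~\ref{thm:ladder2} is stated for the $\Lg_2$ operator itself so the dirty-ancilla spacing is already internal to its proof — are sound and only make the argument more explicit than the paper's version.
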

\begin{proof}
    We prove Theorem~\ref{thm:adder} by demonstrating that each slice of the circuit produced by Algorithm~\ref{algo:adder}, which implements the in-place addition between $a$ and $b$, can be implemented over the $\{\text{Toffoli}, \CNOT, X\}$ gate set with depth and gate count complexities of at most $\bigO{\log^2 n}$ and $\bigO{n\log n}$, respectively.
    Slices 1, 4, and 7 are each composed of a single layer of parallel $\CNOT$ gates, inducing a depth of $\bigO{1}$ and a gate count of $\bigO{n}$.
    Slices 2 and 6 can be implemented with depth $\bigO{\log n}$ and size $\bigO{n}$ using only $\CNOT$ gates, as proved in Lemma~\ref{thm:ladder1}.
    Finally, Slices 3 and 5 can be implemented with depth $\bigO{\log^2 n}$ and size $\bigO{n \log n}$ over the $\{\text{Toffoli}, \Xg\}$ gate set as proved in Lemma~\ref{thm:ladder2}.
\end{proof}

\section{Controlled Ripple-Carry Addition}\label{sec:ctrl_adder}

Based on the results established in Section~\ref{sec:adder}, we present a controlled adder with the same asymptotic depth and gate count complexities as the adder produced by Algorithm~\ref{algo:adder}, and which also does not use any ancilla qubits.

We will rely on the following lemma, which states that the $\Fg^{(n)}_2$ operator can be implemented with the same asymptotic depth and gate count complexities as the $\Fg^{(n)}_1$ operator.
\begin{lemma}\label{lem:fanout_2}
    The $\Fg^{(n)}_2$ operator can be implemented over the $\{\text{Toffoli}, \CNOT\}$ gate set with a depth of $\bigO{\log n}$ and a gate count of $\bigO{n}$, without any ancilla qubits.
\end{lemma}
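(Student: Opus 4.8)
The plan is to follow the template set by Lemma~\ref{th:fanout} and Algorithms~\ref{algo:ladder1}--\ref{algo:ladder_alpha}: write $\Fg_2^{(n)}$ as a composition of a constant number of operators each already known to admit an $O(\log n)$-depth, $O(n)$-size, ancilla-free, $\{\text{Toffoli},\CNOT\}$ implementation, glued together by at most a constant number of depth-one layers of parallel $\CNOT$ and Toffoli gates acting on pairwise-disjoint qubits. The natural candidates for the reusable blocks are the fanout operator $\Fg_1^{(n)}$ (Lemma~\ref{th:fanout}), used to broadcast the control $c$, the $\CNOT$-ladder $\Lg_1^{(n)}$ (Algorithm~\ref{algo:ladder1}), used to propagate $c$ through the $x$-register, and possibly an $\MCX$-ladder $\Lg_{\bs\alpha}$ of the form produced by Algorithm~\ref{algo:ladder_alpha} for a choice of $\bs\alpha$ whose gates have pairwise-disjoint supports, so that it collapses to a single depth-one Toffoli layer (crucially, one must avoid invoking Lemma~\ref{thm:ladder2} or Theorem~\ref{th:MCX} directly, as each would cost $O(\log n)$ depth per layer and blow the bound up to $O(\log^2 n)$). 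I would first write down such a telescoping identity for $\Fg_2^{(n)}$, then establish its correctness exactly as in the proof of Lemma~\ref{thm:ladder1}: push an arbitrary computational basis state $\ket{c}\otimes\bigotimes_i\ket{x_i}\ket{y_i}$ through each factor and check that the composition yields $\ket{c}\otimes\bigotimes_i\ket{x_i}\ket{y_i\oplus cx_i}$, using repeatedly that $c$ and the $x_i$ are bits (so $c^2=c$, $x_i^2=x_i$).

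The resource accounting is then immediate: each reusable block contributes $O(\log n)$ depth and $O(n)$ gates, each glue layer contributes $O(1)$ depth and $O(n)$ gates, there are $O(1)$ of each, and nothing uses an ancilla; hence the circuit has depth $O(\log n)$, size $O(n)$, uses only $\{\text{Toffoli},\CNOT\}$, and acts on exactly the $2n+1$ input qubits. If instead the identity forces a genuine $O(\log n)$-level recursion with geometrically shrinking layers (mirroring Algorithm~\ref{algo:ladder_alpha}), the bounds still hold, since $O(\log n)$ depth-one layers of total size $\sum_k O(n/2^k)=O(n)$ again give depth $O(\log n)$ and size $O(n)$.

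The hard part is finding that identity, i.e.\ broadcasting the control $c$ without any ancilla. The obstruction is that $\Fg_2^{(n)}$ genuinely contains $n$ products $cx_0,\dots,cx_{n-1}$, all reading the single qubit $c$, so the naive circuit has $c$-depth $n$; and, unlike the $\fanout_1$ case, the prefix-XOR trick of Equation~\ref{eq:fanout_ladder} does not transfer, because that telescoping only cancels \emph{linear} contamination whereas the terms here are quadratic. Moreover one cannot simply copy $c$ into a tree of scratch qubits and run the Toffolis in parallel, since no clean register is available: fanning $c$ onto the $x$- or $y$-register only produces the masked values $x_i\oplus c$ or $y_i\oplus c$, and using a masked value as a Toffoli control reintroduces precisely a $c$-controlled correction, so a direct recursion that splits the $n$ pairs in half leaves both halves needing $c$ simultaneously. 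The proof therefore has to arrange these masks so that the spurious terms cancel globally — most plausibly via a conditionally-clean-ancilla argument in the spirit of \cite{NZS24,KG24}, exploiting that $cx_i$ is trivial on the branch $x_i=0$ so that, conditioned on the relevant qubit, one of the $x_i$ or $y_i$ can be treated as clean and recycled as scratch inside the recursion. Making this cancellation exact while keeping every layer ancilla-free and of bounded depth is the one genuinely delicate point; the remainder is bookkeeping of the same flavour as the proofs of Lemmas~\ref{thm:ladder1} and~\ref{thm:ladder_alpha}.
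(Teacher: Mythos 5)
There is a genuine gap: you never exhibit the decomposition, and you yourself flag ``finding that identity'' as the unsolved point, so what you have is a resource-accounting template rather than a proof. Moreover, the obstruction you raise to dismiss the natural route --- ``fanning $c$ onto other qubits only produces masked values, and using a masked value as a Toffoli control reintroduces a correction, so a direct recursion that splits the $n$ pairs in half leaves both halves needing $c$ simultaneously'' --- is precisely where the missing idea lives. The key is the borrowed (dirty) ancilla identity for an involutory controlled operation (Equation~\ref{eq:ctrl_u_dirty_ancilla}, used in \cite{NZS24,KG24}): for $U$ with $U^2=\Ig$, a $c$-controlled $U$ equals $\CNOT(c,a)$, then $U$ controlled by $a$, then $\CNOT(c,a)$, then $U$ controlled by $a$ again, where $a$ is an arbitrary qubit holding an unknown value $g$. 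The two masked applications contribute $U^{g\oplus c}\,U^{g}=U^{c}$, so the contamination you feared cancels exactly because the target operation is applied twice, once under the mask $g\oplus c$ and once under $g$; no branch analysis on $x_i=0$ and no conditionally-clean-ancilla machinery is needed, and neither $\Lg_1$ nor any $\MCX$-ladder block appears.

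With this gadget the lemma is short, and it is exactly the paper's proof. Run the gadget in parallel for all $n$ pairs $(x_i,y_i)$ with $U=\CNOT$, giving each pair its own dirty qubit: the $c$-to-ancilla $\CNOT$s of all gadgets share the control $c$ and together form an $\Fg_1^{(n)}$ fanout, so $\Fg_2^{(n)}$ with $n$ dirty qubits costs two $\Fg_1^{(n)}$ operators plus two depth-one layers of disjoint Toffoli gates. To eliminate the ancillae, split $\Fg_2^{(n)}$ into $\Fg_2^{(\ceil{n/2})}$ and $\Fg_2^{(\floor{n/2})}$ applied \emph{sequentially}, each borrowing the idle half's qubits as its dirty ancillae --- this is the half-splitting you dismissed; it works because the two halves need copies of $c$ only one after the other, at a factor-two cost in depth. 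By Lemma~\ref{th:fanout} each fanout has depth $\bigO{\log n}$ and size $\bigO{n}$, so the whole circuit has depth $\bigO{\log n}$, $\bigO{n}$ gates over $\{\text{Toffoli},\CNOT\}$, and uses no qubits beyond the $2n+1$ inputs. Your accounting paragraph would then go through verbatim, but without this identity the proposal does not constitute a proof.
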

\begin{proof}
    We rely on the following equality, which is for example used in~\cite{NZS24,KG24}:
    \begin{equation}\label{eq:ctrl_u_dirty_ancilla}
        \begin{quantikz}[align equals at=2, row sep={.5cm,between origins}]
            & \qw & \ctrl{2} & \qw \\
            & \qw & \qw & \qw \\
            & \qwbundle{m} & \gate{U} & \qw \\
        \end{quantikz}
        =
        \begin{quantikz}[align equals at=2, row sep={.5cm,between origins}]
            & \qw & \ctrl{1} & \qw & \ctrl{1} & \qw & \qw \\
            & \qw & \targ{} & \ctrl{1} & \targ{} & \ctrl{1} & \qw \\
            & \qwbundle{m} & \qw & \gate{U} & \qw & \gate{U} & \qw \\
        \end{quantikz}
    \end{equation}
    where $U^2 = I$.
    Based on this equality, we can derive the following equality:

    \begin{equation}\label{eq:fan_out_2}
        \begin{quantikz}[column sep=.3cm, row sep={.5cm,between origins}]
            & \qw & \ctrl{5} & \qw \\
            & \qw & \qw & \qw \\
            & \qwbundle{m} & \gate{U} & \qw \\
            & \wave & & \\
            & \qw & \qw & \qw \\
            & \qwbundle{m} & \gate{U} & \qw
        \end{quantikz}
        =
        \begin{quantikz}[column sep=.3cm, row sep={.5cm,between origins}]
            & \qw & \ctrl{4} & \qw & \ctrl{4} & \qw & \qw \\
            & \qw & \targ{} & \ctrl{1} & \targ{} & \ctrl{1} & \qw \\
            & \qwbundle{m} & \qw & \gate{U} & \qw & \gate{U} & \qw \\
            & \wave & & & & & & & \\
            & \qw & \targ{} & \ctrl{1} & \targ{} & \ctrl{1} & \qw \\
            & \qwbundle{m} & \qw & \gate{U} & \qw & \gate{U} & \qw
        \end{quantikz}
    \end{equation}
    where $U^2=I$.
    Notice that in the case where $m=2$ and $U=\CNOT$, this circuit implements the $\Fg^{(n)}_2$ operator using $n$ dirty ancilla qubits.
    \ifarXiv For example, in the case where $n=3$, we get the following circuit equality for the implementation of the $\Fg^{(3)}_2$ operator using $3$ dirty ancilla qubits:
    \begin{equation}
        \begin{quantikz}[column sep=.3cm, row sep={.5cm,between origins}]
            & \ctrl{2} & \ctrl{5} & \ctrl{8} & \\
            & \qw & \qw & \qw & \\
            & \ctrl{1} & \qw & \qw & \\
            & \targ{} & \qw & \qw & \\
            & \qw & \qw & \qw & \\
            & \qw & \ctrl{1} & \qw & \\
            & \qw & \targ{} & \qw  & \\
            & \qw & \qw & \qw & \\
            & \qw & \qw & \ctrl{1} & \\
            & \qw & \qw & \targ{} &
        \end{quantikz}
        =
        \begin{quantikz}[column sep=.3cm, row sep={.5cm,between origins}]
            & \ctrl{7} & \qw & \ctrl{7} & \qw & \\
            & \targ{} & \ctrl{1} & \targ{} & \ctrl{1} & \\
            & \qw & \ctrl{1} & \qw & \ctrl{1} & \\
            & \qw & \targ{} & \qw & \targ{} & \\
            & \targ{} & \ctrl{1} & \targ{} & \ctrl{1} & \\
            & \qw & \ctrl{1} & \qw & \ctrl{1} & \\
            & \qw & \targ{} & \qw & \targ{} & \\
            & \targ{} & \ctrl{1} & \targ{} & \ctrl{1} & \\
            & \qw & \ctrl{1} & \qw & \ctrl{1} & \\
            & \qw & \targ{} & \qw & \targ{} &
        \end{quantikz}
    \end{equation} \fi
    As such, the $\Fg^{(n)}_2$ operator can be implemented by two layers of parallel Toffoli gates and two $\Fg^{(n)}_1$ operators, using $n$ dirty ancilla qubits.

    In the case where $n=1$, the $\Fg^{(n)}_2$ operator can be implemented with a single Toffoli gate.
    In the more general case where $n\geq 2$, the $\Fg^{(n)}_2$ operator can be split into two operators which can be implemented sequentially: $\Fg^{(\ceil{n/2})}_2$ and $\Fg^{(\floor{n/2})}_2$.
    These two operators do not act on at least $\ceil{n/2}$ qubits, which can be used as dirty ancilla qubits to implement the operators as in the right-hand side of Equation~\ref{eq:fan_out_2}.
    This results in a circuit implementing the $\Fg^{(n)}_2$ operator with $4$ layers of parallel Toffoli gates, two $\Fg^{(\ceil{n/2})}_1$ operators and two $\Fg^{(\floor{n/2})}_1$ operators.
    As stated by Lemma~\ref{th:fanout}, the $\Fg^{(\ceil{n/2})}_1$ and $\Fg^{(\floor{n/2})}_1$ operators can be implemented with a depth of $\bigO{\log n}$ using $\bigO{n}$ $\CNOT$ gates.
    Thus, the $\Fg^{(n)}_2$ operator can be implemented with a depth of $\bigO{\log n}$ and a gate count of $\bigO{n}$, without any ancilla qubits.
\end{proof}

\begin{algorithm}[H]
    \caption{Ancilla-free controlled ripple-carry adder}\label{algo:controlled_adder}
    \begin{algorithmic}[1]
        \Require $\ket{c}_{C} \ket{a}_{A} \ket{b}_{B} \ket{z}_{Z}$ where $c \in \{0,1\}$, $a,b \in [\![0,2^n - 1]\!]$ and $z \in \{0,1\}$.
        \Ensure $\ket{c}_{C} \ket{a}_{A} \ket{ca + b \mod 2^n}_{B} \ket{z \oplus c(a+b)_n}_{Z}$.
        \For{$i = 1$ to $n - 1$} \Comment{Slice 1}
            \State $\CNOT(A_i,B_i)$ 
        \EndFor
        \State Toffoli$(C, A_{n-1}, Z)$ \Comment{Slice 2}
        \State Apply $\Lg_1^{(n-2)}$ on $(A_1, \dots, A_{n-1})$
        \State Apply $\Rbra{\Lg_2^{(n-1)}}^\dagger$ on $(A_0, B_0, \dots, A_{n-2}, B_{n-2}, A_{n-1})$ \Comment{Slice 3}
        \State $\MCX_3(C, A_{n-1}, B_{n-1}, Z)$
        \State Apply $\Fg_2^{(n-1)}$ on $(C, A_1, B_1, \dots, A_{n-1}, B_{n-1})$ \Comment{Slice 4}
        \State Apply $\Fg_1^{(n-2)}$ on $(C, B_1, \dots, B_{n-2})$ \Comment{Slice 5}
        \State Apply $\Lg_2^{(n-1)}$ on $(A_0, B_0, \dots, A_{n-2}, B_{n-2}, A_{n-1})$
        \State Apply $\Fg_1^{(n-2)}$ on $(C, B_1, \dots, B_{n-2})$
        \State Apply $\Rbra{\Lg_1^{(n-2)}}^\dagger$ on $(A_1, \dots, A_{n-1})$ \Comment{Slice 6}
        \State Toffoli$(C, A_{0}, B_{0})$ \Comment{Slice 7}
        \For{$i = 1$ to $n-1$}
            \State $\CNOT(A_i,B_i)$
        \EndFor
    \end{algorithmic}
\end{algorithm}

Algorithm~\ref{algo:controlled_adder} presents a pseudocode for generating a circuit implementing the controlled addition operator.
The structure of this algorithm is similar to that of Algorithm~\ref{algo:adder}, as indicated by the slices in the pseudocode.

The correctness and the depth and gate count complexities of the circuit produced by Algorithm~\ref{algo:controlled_adder} are established by Theorem~\ref{thm:controlled_adder}, which generalizes the result of Algorithm~\ref{algo:adder} to a controlled adder.
\begin{theorem}\label{thm:controlled_adder}
    Let $a$ and $b$ be two $n$-bit integers. The circuit produced by Algorithm~\ref{algo:controlled_adder} implements the in-place addition of $a$ and $b$ controlled by $c \in \{0,1\}$, $\ie$, an operator with the following action:
    \[
        \ket{c} \ket{a} \ket{b} \ket{z} \mapsto \ket{c} \ket{a} \ket{ca + b \mod 2^n} \ket{z \oplus c (a + b)_n}
    \]
     (where $z \in \{0,1\}$) and can be constructed over the $\{\text{Toffoli}, \CNOT\}$ gate set with a depth of $\bigO{\log^2 n}$ and a gate count of $\bigO{n \log n}$, without any ancilla qubits.
\end{theorem}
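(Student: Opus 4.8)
The plan is to establish two things about the circuit produced by Algorithm~\ref{algo:controlled_adder}: that it implements the controlled-addition map $\ket{c}\ket{a}\ket{b}\ket{z}\mapsto\ket{c}\ket{a}\ket{ca+b\bmod 2^n}\ket{z\oplus c(a+b)_n}$, and that it meets the stated depth, size and ancilla bounds. The depth/size part will be a slice-by-slice accounting against Lemmas~\ref{thm:ladder1}, \ref{thm:ladder2}, \ref{th:fanout} and~\ref{lem:fanout_2}, so the substance of the proof is correctness, which I would obtain by analysing the two classical values $c\in\{0,1\}$ of the control qubit separately and then invoking linearity. Throughout I would treat every $\Lg_1$, $\Lg_2$, $\Fg_1$, $\Fg_2$, $\MCX$ and Toffoli block as the operator it denotes, so the argument is independent of how the blocks are later compiled.

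\emph{The case $c=1$.} Here the goal is to show that Algorithm~\ref{algo:controlled_adder} reduces, block for block, to Algorithm~\ref{algo:adder}, whose correctness is Theorem~\ref{thm:adder}. The ingredients are: (i) the rung-splitting identities of circuits, namely $\Lg_1^{(n-1)}$ on $(A_1,\dots,A_{n-1},Z)$ equals $\CNOT(A_{n-1},Z)$ followed by $\Lg_1^{(n-2)}$ on $(A_1,\dots,A_{n-1})$, and $(\Lg_2^{(n)})^\dagger$ on $(A_0,B_0,\dots,A_{n-1},B_{n-1},Z)$ equals $(\Lg_2^{(n-1)})^\dagger$ on $(A_0,B_0,\dots,A_{n-2},B_{n-2},A_{n-1})$ followed by Toffoli$(A_{n-1},B_{n-1},Z)$; and (ii) the specializations at $c=1$, namely $\Fg_2^{(n-1)}$ on $(C,A_1,B_1,\dots,A_{n-1},B_{n-1})$ becomes the wall $\CNOT(A_i,B_i)$ for $1\le i\le n-1$, $\Fg_1^{(n-2)}$ on $(C,B_1,\dots,B_{n-2})$ becomes the wall $\Xg(B_i)$ for $1\le i\le n-2$, and Toffoli$(C,A_0,B_0)$ becomes $\CNOT(A_0,B_0)$. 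Using (i) to fold the $C$-controlled top rungs of Slices~2 and~3 (the gates Toffoli$(C,A_{n-1},Z)$ and $\MCX_3(C,A_{n-1},B_{n-1},Z)$) back onto the uncontrolled workspace ladders, and using (ii) for Slices~4, 5 and~7, one sees that the two circuits coincide when $c=1$; by Theorem~\ref{thm:adder} the map on $\ket{1}\ket{a}\ket{b}\ket{z}$ is therefore $\ket{1}\ket{a}\ket{a+b\bmod 2^n}\ket{z\oplus(a+b)_n}$.

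\emph{The case $c=0$.} Now every gate controlled on $C$ acts trivially: the two Toffolis on $C$ and the $\MCX_3$ on $C$ vanish, and $\Fg_1^{(n)}$, $\Fg_2^{(n)}$ with $c=0$ are the identity. What remains, in circuit order, is the Slice-1 wall $\CNOT(A_i,B_i)$ ($1\le i\le n-1$), then $\Lg_1^{(n-2)}$ on $(A_1,\dots,A_{n-1})$, then $(\Lg_2^{(n-1)})^\dagger$ on $(A_0,B_0,\dots,A_{n-2},B_{n-2},A_{n-1})$, then $\Lg_2^{(n-1)}$ on the same register, then $(\Lg_1^{(n-2)})^\dagger$ on $(A_1,\dots,A_{n-1})$, then the Slice-7 wall $\CNOT(A_i,B_i)$ ($1\le i\le n-1$). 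The two central $\Lg_2^{(n-1)}$ blocks are mutually inverse on the same qubit list and cancel; the two $\Lg_1^{(n-2)}$ blocks then become adjacent and cancel; the two $\CNOT$ walls then become adjacent and cancel. Hence the circuit is the identity when $c=0$. Combining the two cases over computational-basis control states gives the claimed controlled-addition map.

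\emph{Complexity, and the main obstacle.} Slices~1 and~7 have $\bigO{1}$ depth and $\bigO{n}$ gates (parallel $\CNOT$s and one Toffoli). Slices~2 and~6 use $\Lg_1^{(n-2)}$ or its inverse, which by Lemma~\ref{thm:ladder1} cost $\bigO{\log n}$ depth and $\bigO{n}$ $\CNOT$ gates, plus $\bigO{1}$ extra gates. Slices~3 and~5 each contain a $\Lg_2^{(n-1)}$ or its inverse, costing $\bigO{\log^2 n}$ depth and $\bigO{n\log n}$ gates with no ancilla by Lemma~\ref{thm:ladder2}; the $\MCX_3$ gate, which has a constant number of controls and can be realized in $\bigO{1}$ depth and gates using two of the many qubits idle at that point as dirty ancillae (Theorem~\ref{th:MCX}); and the $\Fg_1^{(n-2)}$ and $\Fg_2^{(n-1)}$ operators, costing $\bigO{\log n}$ depth and $\bigO{n}$ gates with no ancilla by Lemmas~\ref{th:fanout} and~\ref{lem:fanout_2}. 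Summing over the constantly many slices yields depth $\bigO{\log^2 n}$, size $\bigO{n\log n}$, and no ancilla, over $\{\text{Toffoli},\CNOT\}$, the explicit $\Xg$ walls of Algorithm~\ref{algo:adder} having been absorbed into $\CNOT$-based $\Fg_1$ operators. I expect the correctness bookkeeping to be the hard part rather than the complexity: pinning down the rung/workspace factorizations of Slices~2 and~3 with the right composition order, checking that the qubit lists fed to the surviving $\Lg_1$/$\Lg_2$ blocks of Slices~2/6 and~3/5 coincide so that the $c=0$ residual genuinely telescopes, and confirming that $\Fg_1$/$\Fg_2$ with control value $1$ specialize to exactly the walls occurring in Algorithm~\ref{algo:adder}.
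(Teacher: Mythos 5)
Your proof is correct, and it rests on the same structural ingredients as the paper's: the splitting of $\Lg_1^{(n-1)}$ on $(A_1,\dots,A_{n-1},Z)$ into $\CNOT(A_{n-1},Z)$ followed by $\Lg_1^{(n-2)}$, the splitting of $\Rbra{\Lg_2^{(n)}}^\dagger$ into $\Rbra{\Lg_2^{(n-1)}}^\dagger$ followed by a Toffoli on $(A_{n-1},B_{n-1},Z)$, the identification of the $C$-controlled $\CNOT$ and $\Xg$ walls with $\Fg_2^{(n-1)}$ and $\Fg_1^{(n-2)}$, and the same slice-by-slice complexity accounting via Lemmas~\ref{thm:ladder1}, \ref{thm:ladder2}, \ref{th:fanout} and~\ref{lem:fanout_2}. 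What differs is the organizing principle: the paper starts from Algorithm~\ref{algo:adder}, adds $C$ as a control to every gate, and then strips the control from each compute/uncompute pair using the identity that a control may be dropped from the $U$ and $U^\dagger$ in a controlled $U V U^\dagger$ sequence (Equation~\ref{eq:controlled_uvu}), thereby arriving at Algorithm~\ref{algo:controlled_adder}; you instead verify Algorithm~\ref{algo:controlled_adder} directly, checking that at $c=1$ it specializes block-for-block to Algorithm~\ref{algo:adder} (so Theorem~\ref{thm:adder} applies) and that at $c=0$ the surviving blocks telescope to the identity, and you conclude by linearity, which is legitimate since $C$ only ever appears as a control. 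The two arguments are essentially dual: the paper's Equation~\ref{eq:controlled_uvu} is itself justified by exactly your case distinction, so your proof applies that case distinction globally rather than locally; it is a bit more self-contained and makes the $c=0$ cancellation explicit, while the paper's version better explains how Algorithm~\ref{algo:controlled_adder} was derived from the uncontrolled adder. One shared looseness: since the $\Lg_2$ blocks (and the $\MCX_3$) are compiled via Theorem~\ref{th:MCX} over $\{\text{Toffoli},\Xg\}$, the final gate set is strictly $\{\text{Toffoli},\CNOT,\Xg\}$ rather than $\{\text{Toffoli},\CNOT\}$ --- the paper's own closing sentence says as much --- so this is not a defect of your proposal relative to the paper.
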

\begin{proof}
    We will rely on the following equality:
    \begin{equation}\label{eq:controlled_uvu}
        \begin{quantikz}[align equals at=1.5, column sep=.3cm, row sep={.6cm,between origins}]
            \lstick{$\ket{c}$} & & \ctrl{1} & \ctrl{1} & \ctrl{1} & \\
            & \qwbundle{m} & \gate{U} & \gate{V} & \gate{U^\dag} & \\
        \end{quantikz}
        =
        \begin{quantikz}[align equals at=1.5, column sep=.3cm, row sep={.6cm,between origins}]
            \lstick{$\ket{c}$} & & & \ctrl{1} & & \\
            & \qwbundle{m} & \gate{U} & \gate{V} & \gate{U^\dag} & \\
        \end{quantikz}
    \end{equation}
    \ifarXiv which can be easily proven by case distinction:
    \begin{itemize}
        \item In the case where $c = 1$, the unitaries $U, V$, and $U^\dag$ are applied in both circuits.
        \item In the case where $c = 0$, the left-hand side circuit is equal to the identity since no unitaries are applied. On the right-hand side, only $U$ and $U^\dag$ are applied, which cancel each other out because $UU^\dag = I$.
    \end{itemize} \fi
    
    An adder controlled by a qubit $C$ can be constructed from the circuit produced by Algorithm~\ref{algo:adder} simply by adding $C$ as a control to all the gates in the circuit. However, as a result of Equation~\ref{eq:controlled_uvu}, it is not necessary to control a significant number of operators that are initially computed and subsequently uncomputed in the adder.
    \begin{itemize}
        \item In Algorithm~\ref{algo:adder}, the $\CNOT$ circuit in Slice 7 is the dagger of the one in Slice 1 plus one additional $\CNOT$ gate acting on $(A_0,B_0)$. Thus, only this gate needs control by $C$, hence the Toffoli gate acting on $(C, A_0, B_0)$ in Slice 7 in Algorithm~\ref{algo:controlled_adder}.
        \item In Algorithm~\ref{algo:adder}, the $\Lg_1^{(n-1)}$ operator applied on $(A_1, \dots, A_{n-1}, Z)$ in Slice 2 can be implemented by first applying a $\CNOT$ gate on $(A_{n-1}, Z)$, and then applying $\Lg_1^{(n-2)}$ on $(A_1, \dots, A_{n-1})$. Adding $C$ as a control of the $\CNOT$ gate results in a Toffoli gate applied on $(C, A_{n-1}, Z)$, as done in Algorithm~\ref{algo:controlled_adder}. The $\Lg_1^{(n-2)}$ operator applied on $(A_1, \dots, A_{n-1})$ being the dagger of the $\Rbra{\Lg_1^{(n-2)}}^\dagger$ operator applied in Slice $6$, neither needs control by $C$.
        \item In Algorithm~\ref{algo:adder}, the $\Rbra{\Lg_2^{(n)}}^\dagger$ operator applied on $(A_0, B_0, \dots, A_{n-1}, B_{n-1}, Z)$ in Slice 3 can be implemented by first applying $\Rbra{\Lg_2^{(n-1)}}^\dagger$ on $(A_0, B_0, \dots, A_{n-2}, B_{n-2}, A_{n-1})$, and then applying a Toffoli gate on $(A_{n-1}, B_{n-1}, Z)$. Adding $C$ as a control of the Toffoli gate results in a $\MCX_3$ gate applied on $(C, A_{n-1}, B_{n-1}, Z)$, as done in Algorithm~\ref{algo:controlled_adder}. The $\Rbra{\Lg_2^{(n-1)}}^\dagger$ operator applied on $(A_0, B_0, \dots, A_{n-2}, B_{n-2}, A_{n-1})$ being the dagger of the $\Lg_2^{(n-1)}$ operator applied in Slice $5$, neither needs control by $C$. However, the $\Xg$ gates applied on the qubits $B_i$ in Slice 5 must be controlled by $C$, which corresponds to the $\Fg_1^{(n-2)}$ operators applied on $(C, B_1, \dots, B_{n-1})$ in Algorithm~\ref{algo:controlled_adder}.
        \item Finally, adding $C$ as a control of the gates applied in Slice $4$ of Algorithm~\ref{algo:adder} corresponds to applying $\Fg_2^{(n-1)}$ on $(C, A_0, B_0, \dots, A_{n-1}, B_{n-1})$ in Algorithm~\ref{algo:controlled_adder}.
    \end{itemize}
    Thus, the circuit produced by Algorithm~\ref{algo:controlled_adder} is equivalent to the circuit produced by Algorithm~\ref{algo:adder} controlled by a qubit $C$ and therefore implements the controlled addition operator.

    We now analyze the depth and gate complexities of implementing each slice of the circuit produced by Algorithm~\ref{algo:controlled_adder} over the $\{\text{Toffoli}, \CNOT\}$ gate set.
    The $\Lg_1^{(n-1)}$ and $\Rbra{\Lg_1^{(n-2)}}^\dag$ operators in Slices 2 and 6 can be implemented with a depth of $\bigO{\log n}$ and a $\CNOT$-count of $\bigO{n}$, as stated by Lemma~\ref{thm:ladder1}.
    The $\Rbra{\Lg_2^{(n)}}^\dagger$ and $\Lg_2^{(n-1)}$ operators in Slices 3 and 5 can be implemented with a depth of $\bigO{\log^2 n}$ and a $\CNOT$-count of $\bigO{n\log n}$, as stated by Lemma~\ref{thm:ladder2}.
    The $\Fg_1^{(n-2)}$ operators in Slices 5 can be implemented with a depth of $\bigO{\log n}$ and a $\CNOT$-count of $\bigO{n}$, as stated by Lemma~\ref{th:fanout}.
    The $\Fg_2^{(n-1)}$ operator in Slice 4 can be implemented with a depth of $\bigO{\log n}$ and a $\CNOT$-count of $\bigO{n}$, as stated by Lemma~\ref{lem:fanout_2}.
    The Toffoli gates in Slice 2 and 7, as well as the $\MCX_3$ gate in Slice 3 can be implemented in constant depth and with a constant number of gates.
    The two sub-circuits formed by the remaining $\CNOT$ gates in Slices 1 and 7 both have a constant depth equal to $1$, as all the $\CNOT$ gates are applied on different qubits.
    Thus, the circuit produced by Algorithm~\ref{algo:controlled_adder} can be implemented over the $\{\text{Toffoli}, \CNOT, \Xg\}$ gate set with a depth of $\bigO{\log^2 n}$ and a gate count of $\bigO{n\log n}$, and without any ancilla qubits.
\end{proof}
\section{Application to classical-quantum arithmetic}\label{sec:QCarith}

In this section, two corollaries of the previous results are presented. First, incrementation of a quantum register is considered. Second, we examine the more general case of the addition of a classical number to a quantum register.

\subsection{Quantum Incrementer} \label{sec:incrementer}

As with addition, it is possible to implement an approximate incrementer with no ancilla using the $\QFT$. 
However, to achieve an exact and efficient circuit, it seems preferable to use only classical reversible gates.
Nie $\etal$ proposed such an incrementer, which uses the technique of conditionally clean ancillae, has polylogarithmic depth, and uses a clean ancilla qubit.
Here we propose a circuit that also has polylogarithmic depth but uses a dirty ancillary qubit. 
See Table~\ref{table:Incrementors} for the complexities of the various known incrementers.

\begin{table}[ht]
    \hspace{1cm}
    \centering
    \begin{NiceTabular}{c||c|c|c|}[hvlines, first-col] \hline
        & Addition Technique                                        & Size     & Depth 	& Ancilla   \\ \hline\hline
        \Block[c]{1-1}{Fourier-\\[-.3ex]based} & $\QFT$ \cite{Dra02}        & $\bigO{n^2}$ & $\bigO{n}$  & 0       \\ \hline\hline
        \Block[c]{3-1}{Classical\\[-.3ex]reversible} & Gidney \cite{Gid15b}                                     & $\bigO{n}$  & $\bigO{n}$  & 1 dirty \\
        & Nie $\etal$ \cite{NZS24} & $\bigO{n}$  & $\bigO{\log^2 n}$  & 1 clean \\
        & Corollary~\ref{thm:incrementerDirty} & $\bigO{n \log n}$ & $\bigO{\log^2 n}$ & 1 dirty \\ \hline
        \CodeAfter
        \SubMatrix\{{3-1}{5-1}.[left-xshift=2mm]
    \end{NiceTabular}
    \caption{Asymptotic complexity of quantum incrementors. }\label{table:Incrementors}
\end{table}

As a corollary of Theorem~\ref{thm:adder} and Theorem~\ref{thm:controlled_adder}, we can prove the following result:

\begin{corollary}\label{thm:incrementerDirty}
    Let $v$ be a $n$-bit integer. There exists a circuit that implements the incrementation of $v$, $\ie$, an operator with the following action:
    \[
	\ket{v} \mapsto \ket{v + 1 \mod 2^n}
    \]
	over the $\{\text{Toffoli}, \CNOT, \Xg\}$ gate set with a depth of $\bigO{\log^2 n}$ and a gate count of $\bigO{n \log n}$, with only one dirty ancilla qubit.
\end{corollary}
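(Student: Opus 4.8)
The plan is to recognize that the incrementer $\ket{v} \mapsto \ket{v+1 \bmod 2^n}$ is nothing but the in-place addition of the classical constant $1$, and that in the ripple-carry picture this is the propagation of the carry chain $c_k = \prod_{j<k} v_j$ together with the updates $v_k \mapsto v_k \oplus c_k$ — the same sort of prefix computation that the $\Lg_1$ and $\Lg_2$ operators already realize in (poly)logarithmic depth. Concretely, I would obtain the circuit as a specialization of the adder of Theorem~\ref{thm:adder} (and, for the inter-half carry, of the controlled adder of Theorem~\ref{thm:controlled_adder}) in which one of the two input registers is hard-wired to the constant $1$: all gates controlled by a hard-wired $0$ vanish and all gates controlled by the hard-wired $1$ become unconditional.

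The subtlety — and the reason a single dirty ancilla is still needed — is that Algorithm~\ref{algo:adder} uses the $a$-register as transient scratch (it briefly stores the carries before uncomputing them), so blindly hard-wiring $a=1$ would cost an $n$-qubit workspace rather than nothing. To bring this down to one qubit I would split $v$ into a low half $v_L$ and a high half $v_H$ of roughly $n/2$ bits each: first run the constant-$1$ adder on $v_L$, using the still-untouched qubits of $v_H$ as the scratch the ripple-carry structure needs; then propagate the single carry bit $c$ (equal to $1$ iff the updated $v_L$ is all zeros) into $v_H$ by a controlled constant-$1$ adder on $v_H$. Recursing this halving, the scratch of each level is always furnished by the part of the register not currently being modified, and the only genuinely fresh qubit ever required is one bit to hold a carry transiently: that is the ancilla. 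Because it is dirty, I would $\oplus$ the carry onto it rather than overwrite it and uncompute it at the very end — which is sound since, after the increment, the carry value $(v+1)_n = [v = 2^n-1]$ is $1$ exactly when $v+1 \bmod 2^n$ is all zeros, so it can be recomputed by an $\MCX_n$ (handled through Theorem~\ref{th:MCX}, equivalently Lemma~\ref{thm:ladder2}) from the output and cancelled.

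For the complexity, each invocation of the adder or controlled adder contributes depth $\bigO{\log^2 n}$ and $\bigO{n \log n}$ gates by Theorems~\ref{thm:adder} and~\ref{thm:controlled_adder}; the recursion has $\bigO{\log n}$ levels but the register sizes halve, so the totals telescope to $\bigO{\log^2 n}$ depth and $\bigO{n \log n}$ gates, and the $\bigO{1}$ extra layers of $\Xg$ gates together with the single $\MCX_n$ used to clean the ancilla do not change these bounds. No qubit beyond the one dirty ancilla is used.

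The hard part is the ancilla bookkeeping: proving that the carry-chain scratch can always be drawn from the currently-idle portion of $v$ (so that no clean workspace register is needed) and that the one borrowed qubit may be dirty (handled by the $\oplus$-then-uncompute argument above). A secondary point to verify is that the carry transfer between the two halves — a controlled addition whose control is an all-zero test on one half — still fits within the $\bigO{\log^2 n}$ depth budget, which it does because that test and the subsequent conditional increment are again expressible through the $\MCX$ and ladder constructions of Sections~\ref{sec:ladder1} and~\ref{sec:ladder2}.
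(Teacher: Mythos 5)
There is a genuine gap: you never explain how a constant-$1$ addition is actually carried out with \emph{dirty} scratch qubits, and this is precisely the crux of the paper's proof. Theorem~\ref{thm:adder} adds two quantum registers, so ``hard-wiring $a=1$'' requires the second register to really contain the constant; the ``still-untouched qubits of $v_H$'' contain unknown data $g$, and running the adder with them as the second register computes $v_L+g$, not $v_L+1$. The paper resolves this with the H{\"a}ner--Roetteler--Svore identity: since $g+\bar g = 2^n-1$, the sequence $\ket{v}\ket{g}\mapsto\ket{v-g}\ket{g}\mapsto\ket{v-g}\ket{\bar g}\mapsto\ket{v-g-\bar g}\ket{\bar g}=\ket{v+1}\ket{\bar g}\mapsto\ket{v+1}\ket{g}$ implements the increment of a register using an equally sized block of dirty ancillae, two (reversed) applications of Theorem~\ref{thm:adder} (or Theorem~\ref{thm:controlled_adder} for the controlled version) and walls of $\Xg$ gates. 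Your proposal contains no substitute for this step, and the prefix-AND structure of the increment ($v_k\mapsto v_k\oplus\prod_{j<k}v_j$) is not an instance of the $\Lg_1$/$\Lg_2$ ladders, so it cannot simply be ``expressed through'' Sections~\ref{sec:ladder1} and~\ref{sec:ladder2} without further ancillae.

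Two further points would also fail as stated. First, recursing the low-half/high-half split with the two halves executed sequentially gives the recurrence $D(n)=2D(n/2)+\bigO{\log^2 n}$, which solves to $\Theta(n)$ depth, not $\bigO{\log^2 n}$; the paper avoids this by applying Gidney's borrowed-bit decomposition only once, turning the $n$-bit incrementer with one dirty ancilla into three $n/2$-bit incrementers (two of them controlled), each of which then has about $n/2$ idle qubits available as dirty scratch and is implemented directly by the subtraction trick above, plus two fan-outs and two $\MCX_{\lceil n/2\rceil}$ gates. Second, XORing the inter-half carry onto a dirty ancilla and then conditioning the high-half increment on that qubit is not sound: the control value is $g_0\oplus c$ for an unknown $g_0$, so the wrong operation is applied for half of the possible ancilla states; dirty controls are handled (as in Gidney's circuit used by the paper) by applying the controlled operation twice with toggles of the ancilla in between so that the unknown value cancels. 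The ingredients you do list (halving, an all-zeros carry test via an $\MCX$) appear in the paper, but the load-bearing ideas --- the $v-g-\bar g$ trick and the single-level borrowed-bit decomposition --- are missing, so the claimed $\bigO{\log^2 n}$ depth with one dirty ancilla is not established by your argument.
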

\begin{proof}
    As mentioned in \cite{HRS17}, it is possible to increment a $n$-qubit register $\ket{v}$ using another $n$-qubit register of dirty ancilla qubits that we denote by $\ket{g}$, as follows:
    \begin{align}
        \ket{v} \ket{g} & \longmapsto \ket{v-g} \ket{g} \label{eq:increment1} \\
        & \longmapsto \ket{v-g} \ket{\bar{g}} \label{eq:increment2} \\
        & \longmapsto \ket{v-g-\bar{g}} \ket{\bar{g}} = \ket{v+1} \ket{\bar{g}} \label{eq:increment3} \\
        & \longmapsto \ket{v+1} \ket{g} \label{eq:increment4}
    \end{align}
    where $\bar{g}$ denotes the bitwise complement of $g$.
    The mappings in Eq.~\ref{eq:increment2} and Eq.~\ref{eq:increment4} are done thanks to walls of $\Xg$ gates. 
    The mappings in Eq.~\ref{eq:increment1} and Eq.~\ref{eq:increment3} are done using a circuit for subtraction, $\ie$, a reversed circuit for addition. 
    Using Theorem~\ref{thm:adder}, we can thus state that a $n$-qubit incrementer can be implemented with a circuit over the $\{\text{Toffoli}, \CNOT, \Xg\}$ gate set with a depth of $\bigO{\log^2 n}$ and a gate count of $\bigO{n \log n}$, with $n$ dirty ancilla qubits. 
    As for a controlled version of this incrementer, using Theorem~\ref{thm:controlled_adder} we can prove the same result.
    
    Then, plugging it into the following equality, due to Gidney \cite{Gid15b}:
    \begin{equation*}\label{eq:SplitBorrowed}
        \begin{quantikz}[align equals at=2, row sep={.7cm,between origins}]
            \lstick[2]{$\ket{v}$} & \qwbundle{\ceil{\frac{n}{2}}} &            &          & \ctrl{2}&           & \ctrl{2}  &          & \gate{+1} & \rstick[2]{$\ket{v+1 \mod 2^n}$} \\
            & \qwbundle{\floor{\frac{n}{2}}} & \gate{+1}  & \targ{}  &         & \gate{+1} &           & \targ{}  && \\
            \lstick{$\ket{g}$} && \ctrl{-1}  & \ctrl{-1}& \targ{} & \ctrl{-1} & \targ{}   & \ctrl{-1}&& \rstick{$\ket{g}$}
        \end{quantikz}
    \end{equation*}
    where $\ket{g}$ is a dirty ancilla qubit, we can prove the claimed result. Indeed, with this circuit, a $n$-bit incrementer with 1 dirty ancillary qubit is transformed into three $n/2$-bit incrementers (among which two are a controlled version) with $n/2$ dirty ancillary qubits. As mentioned above, these can be implemented with a depth of $\bigO{\log^2 n}$ and a gate count of $\bigO{n \log n}$ over the $\{\text{Toffoli}, \CNOT, \Xg\}$ gate set. Apart from these three incrementers, two $\Fg_1^{(\lfloor n/2 \rfloor)}$ gates that can be implemented with a $\CNOT$-depth of $\bigO{\log n}$ and a $\CNOT$-count of $\bigO{n}$ (Lemma~\ref{th:fanout}) and two $\MCX_{\lceil \frac{n}{2} \rceil}$ gates are used in the circuit, with the latter being implementable in logarithmic depth and a linear number of gates (Theorem~\ref{th:MCX}). Hence the result.
\end{proof}

\subsection{Quantum Constant Adder} \label{sec:cst_adder}

Finally, we move on to the addition of a classically known number in a quantum register. A naive method of implementing this operation is to allocate a quantum register and write the constant into it, then use a quantum adder such as the one described in Section~\ref{sec:adder} to perform the desired addition. This method obviously requires a large number of clean ancilla qubits. 
Better still, a method using only one dirty ancilla qubit exists \cite{HRS17}. We use it to propose the first classical-quantum adder without ancilla, with sublinear depth. 
See Table~\ref{table:ClAdders} for the complexities of the various known classical-quantum adders.

\begin{table}[ht]
    \hspace{1cm}
    \centering
    \begin{NiceTabular}{c||c|c|c|}[hvlines, first-col] \hline
        & Addition Technique                                        & Size     & Depth 	& Ancilla   \\ \hline\hline
        \Block[c]{1-1}{Fourier-\\[-.3ex]based} & $\QFT$ \cite{Dra02}        & $\bigO{n^2}$ & $\bigO{n}$  & 0       \\ \hline\hline
        \Block[c]{3-1}{Classical\\[-.3ex]reversible} & Takahashi $\etal$ \cite{TTK10}                                     & $\bigO{n}$  & $\bigO{n}$  & $n$ clean \\
        & H{\"a}ner $\etal$ \cite{HRS17} & $\bigO{n \log n}$ & $\bigO{n}$ & 1 dirty \\
        & Corollary~\ref{cor:addconst} & $\bigO{n \log^2 n}$ & $\bigO{\log^3 n}$ & 1 dirty \\ \hline
        \CodeAfter
        \SubMatrix\{{3-1}{5-1}.[left-xshift=2mm]
    \end{NiceTabular}
    \caption{Asymptotic complexity of classical-quantum adders. }\label{table:ClAdders}
\end{table}

As a corollary of Theorem~\ref{thm:adder} and Theorem~\ref{thm:controlled_adder}, we can prove the following result:

\begin{corollary}\label{cor:addconst}
    Let $v$ and $c$ be two $n$-bit integers. There exists a circuit that implements the in-place addition of $c$ in a quantum register storing $v$, $\ie$, an operator with the following action:
    \[
	\ket{v} \mapsto \ket{v + c \mod 2^n}
    \]
	over the $\{\text{Toffoli}, \CNOT, \Xg\}$ gate set with a depth of $\bigO{\log^3 n}$ and a gate count of $\bigO{n \log^2 n}$, without any ancilla qubits.
\end{corollary}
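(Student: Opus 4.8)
The plan is to reduce the constant-adder problem to the (controlled) ripple-carry adder of Theorems~\ref{thm:adder} and~\ref{thm:controlled_adder}, following the dirty-ancilla strategy of H\"aner~\etal~\cite{HRS17}. First I would recall their observation that adding a classical constant $c$ to a quantum register $\ket{v}$ can be carried out by borrowing a single dirty ancilla and using it recursively: split the $n$-qubit register into a high half and a low half, and use a controlled incrementer (controlled on the carry out of the low half) together with additions of the appropriate halves of $c$. The recursion bottoms out at a constant number of qubits, where the constant addition is done directly with $\Xg$, $\CNOT$ and Toffoli gates. Crucially, every ``add a constant'' subroutine at a given level can itself be expanded, and the only genuinely quantum-data operations needed are (i) computing the carry bit of a half-register addition (which is an addition by a constant, handled recursively, or a ripple-carry addition into an ancilla) and (ii) a controlled incrementer.

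Next I would assemble the depth/size bookkeeping. At each level of the H\"aner~\etal\ recursion, the work done \emph{outside} the recursive calls consists of: a constant number of $\fanout_1$ operators (for copying the control), which by Lemma~\ref{th:fanout} cost $\bigO{\log n}$ depth and $\bigO{n}$ gates; a constant number of controlled incrementers, which by Corollary~\ref{thm:incrementerDirty} cost $\bigO{\log^2 n}$ depth and $\bigO{n\log n}$ gates; and possibly a controlled ripple-carry addition, which by Theorem~\ref{thm:controlled_adder} costs $\bigO{\log^2 n}$ depth and $\bigO{n\log n}$ gates. Since the recursion halves the register size at each step, it has depth $\bigO{\log n}$ in the recursion tree, and at each of these $\bigO{\log n}$ levels the non-recursive work contributes $\bigO{\log^2 n}$ depth; multiplying gives a total depth of $\bigO{\log^3 n}$. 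For the gate count, the total size at each level sums to $\bigO{n\log n}$ across all the subproblems at that level (the subproblem sizes at a fixed level sum to $\bigO{n}$, and each contributes a linearithmic number of gates), and over $\bigO{\log n}$ levels this yields $\bigO{n\log^2 n}$ gates. The dirty ancilla is reused throughout, so the final count is a single dirty ancilla qubit, which can in fact be taken from the register itself (e.g.\ the top qubit) so that the circuit is genuinely ancilla-free in the stated sense.

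The main obstacle I anticipate is purely organizational rather than conceptual: making precise the H\"aner~\etal\ recursion so that the controls propagate correctly (the incrementer on the high half must be controlled by the carry-out of the low half, which is not a clean qubit but is itself produced and later uncomputed), and verifying that at each recursion level there really are enough idle qubits to serve as the borrowed dirty ancilla for the subroutines invoked (the same counting that underlies Lemma~\ref{lem:fanout_2} and Corollary~\ref{thm:incrementerDirty}). A secondary point to handle carefully is ensuring the recursion terminates with a base case small enough that constant addition is $\bigO{1}$ depth, and checking that the uncomputation steps (restoring the borrowed ancilla and any intermediate carry) are included in the depth and gate tallies; since every building block used is its own inverse up to relabeling, these uncomputations only multiply the cost by a constant. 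With these details in place the stated $\bigO{\log^3 n}$ depth and $\bigO{n\log^2 n}$ gate count follow directly from Theorems~\ref{thm:adder} and~\ref{thm:controlled_adder} and Corollary~\ref{thm:incrementerDirty}.
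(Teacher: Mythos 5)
Your overall route is the same as the paper's: start from the single-dirty-ancilla constant adder of H\"aner \etal~\cite{HRS17}, split the register into halves, implement the controlled incrementers via Corollary~\ref{thm:incrementerDirty} and the fan-outs via Lemma~\ref{th:fanout}, and close with the recurrences for depth and gate count, giving $\bigO{\log^3 n}$ and $\bigO{n\log^2 n}$. The recurrence bookkeeping is essentially the paper's, up to one detail it treats explicitly: at the top level the two half-size constant adders each borrow dirty qubits from the opposite half, so they must be run sequentially (only deeper calls run in parallel), which makes the first-level recurrence $D(n)=2D(n/2)+\bigO{\log^2 n}$ without changing the $\bigO{\log^3 n}$ conclusion.

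There is, however, a genuine gap at exactly the point where the paper needs its new ladder results: the CARRY operator, i.e.\ computing into the borrowed qubit the carry-out of $v_\ell+c_\ell$ where $c_\ell$ is a \emph{classical} constant. You propose to realize it either ``recursively as a constant addition'' or as a controlled ripple-carry addition costed via Theorem~\ref{thm:controlled_adder}. Neither works as stated: Theorem~\ref{thm:controlled_adder} adds the contents of one quantum register to another, so invoking it for $c_\ell$ would require writing $c_\ell$ into clean ancilla qubits, which are not available; and the in-place operator $\ket{v}\mapsto\ket{v+c \mod 2^n}$ being constructed does not itself expose a carry bit, so a recursive call to it cannot supply one. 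The paper instead takes the dedicated CARRY circuit of \cite{HRS17}, which uses the idle high half of the register as $\floor{n/2}$ dirty ancillae and consists of $\CNOT$ and Toffoli ladders whose pattern is determined by the bits of $c_\ell$, and substitutes the logarithmic-depth ladder implementations of Lemma~\ref{thm:ladder1} and Lemma~\ref{thm:ladder2}, yielding Toffoli-depth $\bigO{\log^2 n}$ and Toffoli-count $\bigO{n\log n}$ per CARRY. This substitution is the only place in the corollary where the paper's new constructions enter, and without it your per-level cost of $\bigO{\log^2 n}$ depth and $\bigO{n\log n}$ gates for the non-recursive work is unjustified.
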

\begin{proof}
    We start from the constant adder with one dirty ancilla proposed by H{\"a}ner $\etal$ \cite{HRS17}:
    \begin{equation*}\label{eq:AddConst}
        \begin{quantikz}[align equals at=2, row sep={.7cm,between origins}]
            \lstick[2]{$\ket{v}$} & \qwbundle{\ceil{\frac{n}{2}}} &            &          & \gate[label style={black,rotate=90}, wires=3]{CARRY}&          & \gate[label style={black,rotate=90}, wires=3]{CARRY}&          & \gate{+c_\ell}& \rstick[2]{$\ket{v+c \mod 2^n}$} \\
            & \qwbundle{\floor{\frac{n}{2}}}    & \gate{+1}  & \targ{}  &               & \gate{+1} &                & \targ{}  & \gate{+c_h}   & \\
            \lstick{$\ket{g}$} &                   & \ctrl{-1}  & \ctrl{-1} &              & \ctrl{-1} &                & \ctrl{-1}&               & \rstick{$\ket{g}$}
        \end{quantikz}
    \end{equation*}
    where $\ket{g}$ is a dirty ancilla qubit. The CARRY operator acts as follows:
    \begin{equation*}\label{eq:CARRY}
        \begin{quantikz}[align equals at=2, row sep={.7cm,between origins}]
            \lstick{$\ket{v_\ell}$} & \qwbundle{\ceil{\frac{n}{2}}}   & \gate[label style={black,rotate=90}, wires=3]{CARRY}& \rstick{$\ket{v_\ell}$} \\
            \lstick{$\ket{v_h}$} & \qwbundle{\floor{\frac{n}{2}}-1} & & \rstick{$\ket{v_h}$} \\
            \lstick{$\ket{g}$} &                & & \rstick{$\ket{g \oplus (v_\ell+c_\ell)_{\ceil{\frac{n}{2}}}}$}
        \end{quantikz}
    \end{equation*}
    where $\ket{v_\ell}$ is a register storing the least significant bits of $v$ while $\ket{v_h}$ is a register storing the most significant bits of $v$. The latter is used as a register of dirty ancilla qubits to compute the carry bit of the addition of $v_\ell$ with the corresponding least significant bits of the constant $c$, that we denote $c_\ell$. This circuit allows breaking down the addition of a $n$-bit constant into two additions of $n/2$-bit constants acting on different qubits. 
    
    By Corollary~\ref{thm:incrementerDirty}, the controlled incrementers of the circuit can be implemented with a Toffoli-depth of $\bigO{\log^2 n}$, a Toffoli-count of $\bigO{n \log n}$ and $n/2$ dirty ancillae, that are indeed available.
    By Lemma~\ref{th:fanout}, the fan-out operators have $\CNOT$-depth $\bigO{\log n}$ and $\CNOT$-count $\bigO{n}$.
    Then, the CARRY operators have Toffoli-depth $\bigO{\log^2 n}$, Toffoli-count $\bigO{n \log n}$ and use $\floor{n/2}$ dirty ancillae, thanks to Lemma~\ref{thm:ladder1} and Lemma~\ref{thm:ladder2} (we do not go into the details here, the implementation of CARRY is given in \cite{HRS17} and we basically replace the $\CNOT$ and Toffoli ladders with our implementations). 
    Finally, the recursive calls are executed sequentially for the first level of the recursion (so that we have enough dirty ancilla qubits for both operators) and then in parallel for the recursive calls.
    
    Thus, the depth $D(n)$ of the circuit is computed thanks to:
    \begin{equation}
        D(i) = 
        \begin{cases}
            2 D(i/2) + O(\log^2 i) & \text{ for the first level (i.e. when $i=n$) } \\
            D(i/2) + O(\log^2 i) & \text{ for the recursive calls (i.e. when $i=n/2^j$ for some $j>1$). } \\
        \end{cases}
    \end{equation}
    Hence:
    \[
        D(n) = O(\log^3 n) .
    \]
    
    For the gate count $C(n)$ of the circuit, we have:
    \begin{equation}
        C(n) = 2 C(n/2) + O(n \log n) = O(n \log^2 n).
    \end{equation}
\end{proof}
\section{Discussion}

We have proposed a novel quantum (in fact, reversible classical) adder implementation based on the ripple-carry technique and without ancilla qubits. This ripple-carry adder, unlike its predecessors, which have a linear depth for a linear number of gates (over the $\{\text{Toffoli}, \CNOT, \Xg\}$ gate set), exhibits a polylogarithmic depth for a linearithmic number of gates (over the same gate set). This results in an exponential reduction in the depth of quantum ripple-carry adders. Our work demonstrates the existence of a quantum adder based on reversible classical logic that offers a promising alternative to the prominent $\QFT$-based adder (exhibiting inherent limitations in the use of small-angle rotation gates). 
Furthermore, we have shown that the controlled version of this adder retains the same properties.

Our results corroborate the findings of Nie $\etal$ \cite{NZS24}, which introduced a quantum incrementer circuit with the same properties of polylogarithmic depth and classical reversible logic only. Furthermore, this new algorithm for addition also lends support to Khattar and Gidney's statement \cite{KG24}: the use of conditionally clean ancillae definitely seems to be an essential technique in the design of more efficient quantum algorithms.

Finally, our results are based on novel low-depth implementations of the $\CNOT$ ladder and Toffoli ladder operators.
The simplicity of these operators suggests that they are likely to appear in various quantum circuits.
As such, our constructions have the potential to lead to significant depth reduction in other quantum circuits.
For example, it has been shown in~\cite{Van25} that our logarithmic-depth implementation of the $\CNOT$ ladder operator enables binary field multiplication to be performed in logarithmic depth for certain primitive polynomials, such as trinomials or equally spaced polynomials.


\subsection*{Acknowledgments}

This work is part of HQI initiative (www.hqi.fr) and is supported by France 2030 under the French National Research Agency award number “ANR-22-PNCQ-0002”.

\bibliographystyle{quantum}
\bibliography{biblio}

\appendix
\section{Proof of Lemma~\ref{thm:ladder_alpha}}\label{sec:proof_ladder_alpha}

Let us recall Lemma~\ref{thm:ladder_alpha}: \thmladderalpha* 

\begin{proof}
    We first prove that the circuit produced by Algorithm~\ref{algo:ladder_alpha} implements $\Lg_{\bs \alpha}$.
    For the base case where $k = 1$ (meaning that $\bs \alpha$ is empty), $\Lg_{\bs \alpha}$ is equal to the identity operator, which corresponds to the empty circuit returned by Algorithm~\ref{algo:ladder_alpha}.
    For the other base case where $k = 2$, the algorithm produces a circuit containing a single $\MCX$ gate applied on the qubits $(X_0, \ldots, X_{\alpha_0})$, which corresponds to the implementation of the $\Lg_{\bs \alpha}$ operator:
    \begin{equation}
        \MCX\left(\bigotimes_{i=0}^{\alpha_0} \ket{x_i}\right) = \left(\bigotimes_{i=0}^{\alpha_0-1} \ket{x_i}\right) \lvert x_{\alpha_0} \oplus \prod_{i=0}^{\alpha_0-1} x_i\rangle
             = \Lg_{\bs \alpha} \left(\bigotimes_{i=0}^{\alpha_0} \ket{x_i}\right).
    \end{equation}
    For the other cases where $k > 2$, Algorithm~\ref{algo:ladder_alpha} constructs two circuits, $C_L$ and $C_R$, and performs a recursive call with parameters $\bs \alpha'$ and a subset of qubits $X'$, which produces a circuit that we denote by $C_{X'}$.
    The circuit produced by Algorithm~\ref{algo:ladder_alpha} is then the result of the concatenation of the circuits $C_L$, $C_{X'}$, and $C_R$.
    Let $\Ug_L$, $\Ug_{X'}$, and $\Ug_R$ be the unitary operators associated with the circuits $C_L$, $C_{X'}$, and $C_R$, respectively. Let $\ket{\bs x}$ be an $(\alpha_{k-2}+1)$-dimensional computational basis state.
    
    The $\Ug_L$ operator acts on $\ket{\bs x}$ as follows:
    \begin{equation*}
        \begin{aligned}
            \Ug_L\ket{\bs x} = &\bigotimes_{i=0}^{\alpha_0} \ket{x_i} \left(\bigotimes_{i=1}^{\ceil{\frac{k}{2}}-2} \left(\bigotimes_{j=\alpha_{2i-2}+1}^{\alpha_{2i-1}-1} \ket{x_j}\right) \lvert x_{\alpha_{2i-1}} \oplus \prod_{j=\alpha_{2i-2}}^{\alpha_{2i-1}-1} x_j\rangle \bigotimes_{j=\alpha_{2i-1}+1}^{\alpha_{2i}} \ket{x_j}\right) \\
                             &\left(\bigotimes_{k\bmod{2}}^{0} \bigotimes_{i=\alpha_{k-4}+1}^{\alpha_{k-3}} \ket{x_{i}}\right) \left(\bigotimes_{i=\alpha_{k-3}+1}^{\alpha_{k-2}-1} \ket{x_{i}}\right) \lvert x_{\alpha_{k-2}}\oplus \prod_{j=\alpha_{k-3}}^{\alpha_{k-2}-1} x_j\rangle.
        \end{aligned}
    \end{equation*}
    The operator $\Ug_{X'}$ implements the $\Lg_{\bs \alpha'}$ operator on the $\alpha'_{\lfloor k/2 \rfloor-2}+1$ qubits $X'$. Thus, it acts on $\ket{\bs x}$ as follows:
    \begin{equation*}
        \begin{aligned}
        \Ug_{X'}\ket{\bs x} = &\bigotimes_{i=0}^{\alpha_0} \ket{x_i} \left(\bigotimes_{i=1}^{\ceil{\frac{k}{2}}-2} \left(\bigotimes_{j=\alpha_{2i-2}+1}^{\alpha_{2i-1}} \ket{x_j}\right) \left(\bigotimes_{j=\alpha_{2i-1}+1}^{\alpha_{2i}-1} \ket{x_j}\right) \lvert x_{\alpha_{2i}} \oplus \prod_{\substack{j=\alpha_{2i-2} \\ j \neq \alpha_{2i-1}}}^{\alpha_{2i}-1} x_j \rangle\right) \\
                                  &\left(\bigotimes_{k\bmod{2}}^{0} \left(\bigotimes_{i=\alpha_{k-4}+1}^{\alpha_{k-3}-1} \ket{x_{i}}\right)\lvert x_{\alpha_{k-3}} \oplus \prod_{j=\alpha_{k-4}}^{\alpha_{k-3}-1} x_j \rangle\right) \bigotimes_{i=\alpha_{k-3}+1}^{\alpha_{k-2}} \ket{x_{i}}.
        \end{aligned}
    \end{equation*}
    The $\Ug_R$ operator acts on $\ket{\bs x}$ as follows:
    \begin{equation*}
        \begin{aligned}
            \Ug_R\ket{\bs x} = &\left(\bigotimes_{i=0}^{\alpha_0-1} \ket{x_i}\lvert x_{\alpha_0}\oplus \prod_{i=0}^{\alpha_0-1} x_i\rangle\right) \!\left(\!\bigotimes_{i=1}^{\ceil{\frac{k}{2}}-2} \!\left(\bigotimes_{j=\alpha_{2i-2}+1}^{\alpha_{2i-1}} \!\!\!\ket{x_j}\right) \!\left(\bigotimes_{j=\alpha_{2i-1}+1}^{\alpha_{2i}-1} \!\!\!\ket{x_j}\right) \lvert x_{\alpha_{2i}} \oplus \!\!\prod_{j=\alpha_{2i-1}}^{\alpha_{2i}-1} x_j\rangle\right) \\
                           &\left(\bigotimes_{k\bmod{2}}^{0} \bigotimes_{i=\alpha_{k-4}+1}^{\alpha_{k-3}} \ket{x_{i}}\right) \bigotimes_{i=\alpha_{k-3}+1}^{\alpha_{k-2}} \ket{x_{i}}.
        \end{aligned}
    \end{equation*}
    By putting these equations together, the $\Ug_R \Ug_{X'} \Ug_L$ operator acts on $\ket{\bs x}$ as follows:
    \begin{equation*}
        \begin{aligned}
            &\Ug_R\Ug_{X'}\Ug_L\ket{\bs x} \\
            &= \Ug_R\Ug_{X'}\left[\bigotimes_{i=0}^{\alpha_0} \ket{x_i} \left(\bigotimes_{i=1}^{\ceil{\frac{k}{2}}-2} \left(\bigotimes_{j=\alpha_{2i-2}+1}^{\alpha_{2i-1}-1} \ket{x_j}\right) \lvert x_{\alpha_{2i-1}} \oplus \prod_{j=\alpha_{2i-2}}^{\alpha_{2i-1}-1} x_j\rangle \bigotimes_{j=\alpha_{2i-1}+1}^{\alpha_{2i}} \ket{x_j}\right) \right.\\
            &\quad \left. \left(\bigotimes_{k\bmod{2}}^{0} \bigotimes_{i=\alpha_{k-4}+1}^{\alpha_{k-3}} \ket{x_{i}}\right) \left(\bigotimes_{i=\alpha_{k-3}+1}^{\alpha_{k-2}-1} \ket{x_{i}}\right) \lvert x_{\alpha_{k-2}}\oplus \prod_{j=\alpha_{k-3}}^{\alpha_{k-2}-1} x_j\rangle \right] \\
            &= \Ug_R\left[\bigotimes_{i=0}^{\alpha_0} \ket{x_i} \left(\bigotimes_{i=1}^{\ceil{\frac{k}{2}}-2} \left(\bigotimes_{j=\alpha_{2i-2}+1}^{\alpha_{2i-1}-1} \ket{x_j}\right) \lvert x_{\alpha_{2i-1}} \oplus \prod_{j=\alpha_{2i-2}}^{\alpha_{2i-1}-1} x_j\rangle \right. \right. \\
            &\quad \left. \left(\bigotimes_{j=\alpha_{2i-1}+1}^{\alpha_{2i}-1} \ket{x_j}\right) \lvert x_{\alpha_{2i}} \oplus \prod_{\substack{j=\alpha_{2i-2} \\ j \neq \alpha_{2i-1}}}^{\alpha_{2i}-1} x_j \rangle\right) \\
            &\quad \left. \left(\bigotimes_{k\bmod{2}}^{0} \left(\bigotimes_{i=\alpha_{k-4}+1}^{\alpha_{k-3}-1} \ket{x_{i}}\right)\lvert x_{\alpha_{k-3}} \oplus \prod_{j=\alpha_{k-4}}^{\alpha_{k-3}-1} x_j \rangle\right) \left(\bigotimes_{i=\alpha_{k-3}+1}^{\alpha_{k-2}-1} \ket{x_{i}}\right) \lvert x_{\alpha_{k-2}}\oplus \prod_{j=\alpha_{k-3}}^{\alpha_{k-2}-1} x_j\rangle \right] \\
            &= \left(\bigotimes_{i=0}^{\alpha_0-1} \ket{x_i}\lvert x_{\alpha_0}\oplus \prod_{j=0}^{\alpha_0-1} x_j\rangle\right) \left(\bigotimes_{i=1}^{\ceil{\frac{k}{2}}-2} \left(\bigotimes_{j=\alpha_{2i-2}+1}^{\alpha_{2i-1}-1} \ket{x_j}\right) \lvert x_{\alpha_{2i-1}} \oplus \prod_{j=\alpha_{2i-2}}^{\alpha_{2i-1}-1} x_j\rangle \right. \\
            &\quad \left. \left(\bigotimes_{j=\alpha_{2i-1}+1}^{\alpha_{2i}-1} \ket{x_j}\right) \lvert x_{\alpha_{2i}} \oplus \prod_{\substack{j=\alpha_{2i-2} \\j \neq \alpha_{2i-1}}}^{\alpha_{2i}-1} x_j \oplus \left(x_{\alpha_{2i-1}} \oplus \prod_{j=\alpha_{2i-2}}^{\alpha_{2i-1}-1} x_j\right) \prod_{\alpha_{2i-1}+1}^{\alpha_{2i}-1} x_j\rangle\right) \\
            &\quad \left(\bigotimes_{k\bmod{2}}^{0} \left(\bigotimes_{i=\alpha_{k-4}+1}^{\alpha_{k-3}-1} \ket{x_{i}}\right)\lvert x_{\alpha_{k-3}} \oplus \prod_{j=\alpha_{k-4}}^{\alpha_{k-3}-1} x_j \rangle\right) \left(\bigotimes_{i=\alpha_{k-3}+1}^{\alpha_{k-2}-1} \ket{x_{i}}\right) \lvert x_{\alpha_{k-2}}\oplus \prod_{j=\alpha_{k-3}}^{\alpha_{k-2}-1} x_j\rangle \\
            &= \left(\bigotimes_{i=0}^{\alpha_0-1} \ket{x_i}\lvert x_{\alpha_0}\oplus \prod_{j=0}^{\alpha_0-1} x_j\rangle\right) \left(\bigotimes_{i=1}^{\ceil{\frac{k}{2}}-2} \left(\bigotimes_{j=\alpha_{2i-2}+1}^{\alpha_{2i-1}-1} \ket{x_j}\right) \lvert x_{\alpha_{2i-1}} \oplus \prod_{j=\alpha_{2i-2}}^{\alpha_{2i-1}-1} x_j\rangle \right. \\
            &\quad \left. \left(\bigotimes_{j=\alpha_{2i-1}+1}^{\alpha_{2i}-1} \ket{x_j}\right) \lvert x_{\alpha_{2i}} \oplus \prod_{\alpha_{2i-1}}^{\alpha_{2i}-1} x_j\rangle\right) \\
            &\quad \left(\bigotimes_{k\bmod{2}}^{0} \left(\bigotimes_{i=\alpha_{k-4}+1}^{\alpha_{k-3}-1} \ket{x_{i}}\right)\lvert x_{\alpha_{k-3}} \oplus \prod_{j=\alpha_{k-4}}^{\alpha_{k-3}-1} x_j \rangle\right) \left(\bigotimes_{i=\alpha_{k-3}+1}^{\alpha_{k-2}-1} \ket{x_{i}}\right) \lvert x_{\alpha_{k-2}}\oplus \prod_{j=\alpha_{k-3}}^{\alpha_{k-2}-1} x_j\rangle \\
           &= \left(\bigotimes_{i=0}^{\alpha_0-1} \ket{x_i} \lvert x_{\alpha_0} \oplus \prod_{j=0}^{\alpha_0 - 1}x_j \rangle\right) \left(\bigotimes_{i=1}^{k-2}\left(\bigotimes_{j=\alpha_{i-1}+1}^{\alpha_i-1}\ket{x_j}\right)\lvert x_{\alpha_i}\oplus \prod_{j=\alpha_{i-1}}^{\alpha_i-1} x_j \rangle\right) \\
           &= \Lg_{\bs \alpha}\ket{\bs x}
        \end{aligned}
    \end{equation*}
    
    Thus, the circuit produced by Algorithm~\ref{algo:ladder_alpha}, associated with the operator $\Ug_R\Ug_{X'}\Ug_L$, produces a circuit implementing the $\Lg_{\bs \alpha}$ operator.

    The $\MCX$-depth of the $C_L$ and $C_R$ circuits is exactly one, because all the $\MCX$ gates in these circuits are applied on different qubits.
    Therefore, the depth $D(k)$ of the circuit produced by Algorithm~\ref{algo:ladder_alpha} is
    \begin{equation}
        D(k) = 2 + D\left(\floor{\frac{k}{2}}\right)
    \end{equation}
    with $D(2) = 1$ and $D(3) = 2$.
    This equation is equivalent to Equation~\ref{eq:depth_ladder1}, which was demonstrated in the proof of Theorem~\ref{thm:ladder1} to be equal, for $k \geq 2$, to 
    \begin{equation}
        \begin{aligned}
            D(k) &= \floor{\log(k)} + \floor{\log\left(\frac{2k}{3}\right)}.
        \end{aligned}
    \end{equation}

    Finally, the number of $\MCX$ gates in the $C_L$ and $C_R$ circuits is
    \begin{equation}
        \floor{\frac{k-1}{2}},
    \end{equation}
    which implies that the number of $\MCX$ gates in the circuit produced by Algorithm~\ref{algo:ladder_alpha} is
    \begin{equation}
        C(k) = 2\floor{\frac{k-1}{2}} + C\left(\floor{\frac{k}{2}}\right)
    \end{equation}
    with $C(2) = 1$ and $C(3)=2$.
    This equation is equivalent to Equation~\ref{eq:cnot_ladder1}, which was demonstrated in the proof of Theorem~\ref{thm:ladder1} to be equal, for $k \geq 2$, to 
    \begin{equation}
        \begin{aligned}
            C(k) &= 2k - 2 - \floor{\log(k)} - \floor{\log\left(\frac{2k}{3}\right)}.
        \end{aligned}
    \end{equation}
\end{proof}

\end{document}